\begin{document}

\begin{frontmatter}%

\title{On Solving Reachability in Grid Digraphs  %
using a Pseudoseparator\titlefootnote{A conference version of this paper appeared in
the \href{https://doi.org/10.4230/LIPIcs.FSTTCS.2019.19}{Proceedings of the 39th IARCS Annual Conference on Foundations of Software Technology and Theoretical Computer Science (FSTTCS 2019)}~\cite{conf-version}.}}

\author[Jain]{Rahul Jain \orcidlink{0000-0002-8567-9475}}
\author[Tewari]{Raghunath Tewari}

\begin{abstract}
The reachability problem %
asks to decide %
if there exists a path from one vertex to another in a digraph. 
In a grid digraph, the vertices are
the %
points of a two-dimensional square grid,  %
and an edge can occur between a vertex and its 
immediate horizontal %
and vertical neighbors only.

Asano and Doerr (CCCG'11) presented the first simultaneous time-space
bound for reachability in grid 
digraphs %
by
solving
the problem in polynomial time and $O(n^{1/2 + \epsilon})$ space. In 2018, the space complexity was improved to $\tilde{O}(n^{1/3})$ by Ashida and Nakagawa (SoCG'18).

In this paper, we show that there exists a polynomial-time algorithm that uses $O(n^{1/4 + \epsilon})$ space to solve the reachability problem in a grid 
digraph %
containing $n$ vertices. We define and construct a new separator-like device called pseudoseparator to develop a divide-and-conquer algorithm. This algorithm works in a space-efficient manner to solve reachability.
\end{abstract}

%

\end{frontmatter}

\section{Introduction}
Let $s$ and $t$ be vertices of a given 
directed graph (digraph). %
The problem of 
digraph %
reachability is to decide if there is a path from $s$ to $t$. This problem has %
many  %
applications in the field of algorithms and computational complexity theory. Many algorithms for network-related problems use it as a subroutine. 
Digraph reachability is \textsf{NL}-complete and thus
captures  %
the complexity of %
nondeterministic  %
logarithmic space. Hence designing better algorithms for this problem is of utmost importance
to the theory of computing.  %

Standard 
traversal algorithms such as DFS and BFS give a
linear-time   %
algorithm for this problem, but they require linear space. Savitch's 
divide-and-conquer 
algorithm %
solves  %
reachability in $O(\log^2 n)$ space. However, as a tradeoff, it requires $n^{O(\log n)}$ time \cite{Savitch}. Hence it is natural to ask whether we can get the best of both worlds and design an algorithm for 
digraph %
reachability that runs in polynomial time and uses \emph{polylogarithmic} space. Wigderson asked a relaxed version of this question in his survey, whether 
digraph %
reachability can be solved by a polynomial-time algorithm that uses $O(n^{1-\epsilon})$ space \cite{Wigderson}.

Barnes et al.  %
showed that 
digraph reachability %
can be decided simultaneously in $n/2^{\Theta(\sqrt{\log n})}$ space and polynomial time \cite{BBRS}. Although this algorithm gives a sublinear %
space bound,  %
it still does not %
answer Wigderson's question.

Undirected graphs can be considered as a specific type of digraphs,
where %
adjacency
is a symmetric relation.
Reachability in undirected %
graphs  %
can be solved in logspace \cite{Reingold}. %
Also, for certain classes of digraphs, where the underlying undirected graph is topologically restricted, we know of polynomial time algorithms with space complexity better than linear. %
Imai et al. presented  
a polynomial-time algorithm that %
solves  %
reachability for planar 
digraphs %
using $O(n^{1/2 + \epsilon})$ space for any $\epsilon >0$ \cite{Imai}. Later, Asano et al. presented a polynomial-time algorithm whose space complexity was $\tilde{O}(n^{1/2})$ for the same problem \cite{Asano}. For 
digraphs %
of higher genus, Chakraborty et al. presented a polynomial-time algorithm that uses $\tilde{O}(n^{2/3}g^{1/3})$ space. Their algorithm additionally requires an embedding of the 
underlying undirected %
graph on a surface of genus $g$, as part of the input \cite{ChakrabortyPavan}. They also gave an $\tilde{O}(n^{2/3})$-space %
algorithm for %
$H$-minor-free   %
digraphs %
which requires
a  %
tree decomposition of the 
underlying undirected %
graph as 
part of the %
 input and $O(n^{1/2 + \epsilon})$-space
algorithms for
$K_{3,3}$-minor-free digraphs and for $K_5$-minor-free digraphs. %

For layered planar 
digraphs %
Chakraborty and Tewari showed that for every $\epsilon >0$, there is an $O(n^{\epsilon})$-space %
algorithm \cite{Chakraborty}. Stolee and Vinodchandran presented a polynomial-time algorithm that, for any $\epsilon > 0$ solves \emph{reachability} in 
an acyclic digraph %
with $O(n^{\epsilon})$ sources and embedded on the surface of genus $O(n^\epsilon)$ using $O(n^{\epsilon})$ space \cite{Stolee}. For unique-path 
digraphs %
Kannan et al. presented an %
$O(n^{\epsilon})$-space, %
polynomial-time algorithm \cite{kannan}.

In this paper, grid 
digraphs %
are our concern. Grid
digraphs %
are a subclass of planar
digraphs. %
In a grid 
digraph, %
the  %
vertices are
the %
points %
of an $m \times m$ grid.  %
The edges can only occur between a vertex and its immediate vertical
and  %
horizontal neighbours.  %
We know that reachability in planar 
digraphs %
can  %
be reduced to reachability in grid
digraphs %
 in logarithmic space \cite{Allender}. The reduction, however, causes at least a quadratic blow-up in the size of the 
digraph. %
In this paper, we study the simultaneous time--space complexity of reachability in grid graphs.

Asano and Doerr presented a polynomial-time algorithm that uses $O(n^{1/2 + \epsilon})$ space for solving reachability in grid 
digraphs  \cite{Asano11}.  %
Ashida and Nakagawa presented an algorithm with improved space complexity of $\widetilde{O}(n^{1/3})$ \cite{AshidaNakagawa}.
The latter algorithm %
proceeds  %
by first dividing the input grid 
digraph %
into subgrids. It then
uses  %
a gadget to transform each subgrid into a planar 
digraph, %
making the whole of the resultant 
digraph %
planar. Finally, it
uses  %
the planar reachability algorithm of Imai et al. $\cite{Imai}$ as a subroutine to get the desired space bound.

In this paper, we present a $O(n^{1/4 + \epsilon})$ space and polynomial-time algorithm for grid 
digraph %
reachability, thereby significantly improving the
space bound  %
of Ashida and Nakagawa.
\begin{theorem}[Main Theorem]
\label{thm:main}
For every $\epsilon > 0$, there exists a polynomial-time algorithm that
solves  %
reachability in an %
$n$-vertex   %
grid 
digraph %
using $O(n^{1/4 + \epsilon})$ space.
\end{theorem}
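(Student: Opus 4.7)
The plan is to apply the Ashida--Nakagawa partition recursively rather than as a one-shot reduction. In their scheme, an $m \times m$ grid (with $n=m^2$) is cut into $L \times L$ subgrids; the $O(n/L)$ boundary vertices form an auxiliary planar graph on which the Imai et al.\ algorithm runs in $\widetilde{O}(\sqrt{n/L})$ space, while the boundary-to-boundary reachability that supplies the edges of this auxiliary graph is computed independently on each subgrid in $\widetilde{O}(L)$ space (using planar reachability, since the subgrid is itself planar). Balancing $L = \sqrt{n/L}$ yields $L = n^{1/3}$.

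My observation is that the inner subproblem is again a grid reachability instance on $L^2$ vertices, so it can be solved by a recursive call to our algorithm rather than by planar reachability as a black box. Let $S_k(n)$ denote the space needed with $k$ levels of recursion, with base case $S_0(n) = \widetilde{O}(\sqrt{n})$ from the planar algorithm. Then
\begin{equation*}
S_k(n)\;\le\;\max\bigl(S_{k-1}(L^2),\;\widetilde{O}(\sqrt{n/L})\bigr)+O(\log n).
\end{equation*}
Modelling $S_k(n)$ as $\widetilde{O}(n^{\alpha_k})$ and equating the two terms gives $\alpha_k = 2\alpha_{k-1}/(4\alpha_{k-1}+1)$, whose unique positive fixed point is $\alpha = 1/4$; a short induction shows $\alpha_k - 1/4 = \Theta(2^{-k})$, so $k = O(\log(1/\epsilon))$ suffices to bring $\alpha_k$ below $1/4 + \epsilon$.

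The main obstacle is that the recursion must be realised through an \emph{oracle} interface: at no level can I afford to materialise the auxiliary planar graph or to tabulate any subgrid's boundary-to-boundary reachability, since either would already blow the $O(n^{1/4+\epsilon})$ budget. I need to verify that the Imai et al.\ planar reachability procedure queries its input only through on-the-fly edge accesses, each of which can be serviced by a recursive call using $S_{k-1}(L^2)$ space plus $O(\log n)$ bookkeeping, and that the Ashida--Nakagawa planarizing gadget can likewise be described implicitly via constant-space coordinate-translation routines rather than stored. The recursion stack itself contributes only $O(k \log n)$ bits, which is well within budget for fixed $\epsilon$.

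Polynomial running time follows almost automatically: the recursion depth is $O(1)$ for fixed $\epsilon$, each level issues polynomially many oracle queries to the level below, and both the planar reachability subroutine and the planarizing gadget construction are polynomial-time. Assembling these pieces yields Theorem~\ref{thm:main}.
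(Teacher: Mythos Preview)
Your route is plausible but genuinely different from the paper's. The paper does not iterate Ashida--Nakagawa; instead it works directly with the non-planar auxiliary graph $\aux{G}$ (boundary vertices with subgrid-reachability edges) and introduces a new separator notion, the \emph{pseudoseparator}: a small subgraph $C$ such that deleting $V(C)$ together with every edge that crosses an edge of $C$ shatters $\aux{G}$ into pieces of size $h^{1-\beta}$. Existence of a pseudoseparator of size $O(h^{1/2+\beta/2})$ rests on the same crossing structure that powers the Ashida--Nakagawa gadget (if two reachability chords in a block cross, the ``swapped'' chords are also present). The paper then runs two nested constant-depth recursions: an inner one (\textsf{AuxReach}) that solves reachability on $\aux{G}$ via repeated pseudoseparator decomposition, and an outer one (\textsf{GridReach}) on subgrid side length, analogous to yours.

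Your proposal instead outsources the non-planarity to the Ashida--Nakagawa planarizing gadget, applies Imai et al.\ as a black box to the resulting planar graph, and recurses only on subgrid side length. This is more modular and the recurrence $\alpha_k = 2\alpha_{k-1}/(4\alpha_{k-1}+1)$ is clean; the paper's pseudoseparator argument is more self-contained and does not depend on the internals of somebody else's gadget, at the cost of a fair amount of new infrastructure (the crossing lemmas, the maximal planar subgraph $\mxplanar{H}$, and the shadow-edge construction).

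The gap you flag is the real one, and it is not merely cosmetic. For your recurrence to hold you need the planarized replacement of a single $L\times L$ block to have $O(L)$, not $\Theta(L^2)$, vertices; and you need every vertex, edge, and rotation-system entry of that gadget to be producible from polynomially many block-reachability oracle calls using only $O(\log n)$ workspace beyond the oracle itself. The second point is delicate if the gadget's auxiliary vertices are indexed by global features of the block's boundary-to-boundary relation (e.g.\ ``the $j$-th maximal reachable arc from $v$''): you must show that such indices can be recovered one at a time without materialising the relation. Until those two facts about the Ashida--Nakagawa construction are checked, the proposal is a sketch rather than a proof.
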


The approach of Ashida and Nakagawa \cite{AshidaNakagawa}
is   %
to reduce the size of the input $n$-vertex 
digraph %
to a 
digraph %
of size $O(n^{2/3})$. Their new graph %
preserves  %
reachability between vertices and %
is planar.  %
They use  %
the planar-reachability algorithm of Asano et al. \cite{Asano}.
Our approach is slightly different. We convert the input 
digraph %
into an 
\emph{auxiliary digraph} %
 of size $O(n^{1/2 + \alpha/2})$ for arbitrarily small $\alpha$. The auxiliary 
digraph %
is created by dividing the given grid 
digraph %
into subgrids and replacing paths in each subgrid with a single edge between the boundary vertices.  While our auxiliary 
digraph %
preserves reachability, it is not planar; and hence we cannot use Asano et al.'s algorithm directly. The auxiliary 
digraph %
comes with a drawing with
the  %
\emph{crossing property}, that is, if two edges, $e$ and $f$,
cross each other in this drawing, there necessarily exist two more edges, one from the tail of $e$ to the head of $f$ and the other from the tail of $f$ to the head of $e$, in the 
digraph. %
This property allows us to use a device that we call \emph{pseudoseparator} to solve reachability in it. A pseudoseparator designates a set of vertices, a set of edges and a set of components of the 
digraph, %
such that a path from one component to %
another   %
necessarily either takes a vertex of the pseudoseparator or crosses one of the edges of the pseudoseparator in the drawing.
We finally solve the problem by recursively solving each of these components and using a traversal algorithm. Due to the crossing property, we are required to store only a small number of vertices for performing the traversal, thereby saving space.

In \expref{Section}{sec:prelims}, we state the definitions and
notation  %
that we use in this paper. In \expref{Section}{sec:aux}, we define the auxiliary 
digraph %
and state various properties of it that we use later. In \expref{Section}{sec:pseudosep}, we discuss the concept of the \emph{pseudoseparator}. We give its formal definition and show how to compute the \emph{pseudoseparator} efficiently. In \expref{Section}{sec:auxreach}, we give the algorithm to solve reachability in an auxiliary 
digraph %
and prove its correctness. Finally in \expref{Section}{sec:final} we use the algorithm of \expref{Section}{sec:auxreach} to give an algorithm to decide reachability in grid 
digraphs %
and thus prove \expref{Theorem}{thm:main}.

\section{Preliminaries}
\label{sec:prelims}
Let $[n]$ denote the set $\{0,1,2,\ldots , n\}$. %
We denote the vertex set of a 
digraph %
$G$ by $V(G)$ and its edge set by $E(G)$. We assume that the vertices of a 
digraph %
are \emph{indexed} by integers from $1$ to $\lvert V(G) \rvert$. For a subset $U$ of $V(G)$, we denote the %
sub-digraph %
 of $G$ induced by the vertices of $U$ as $G[U]$ and we denote the %
sub-digraph %
 of $G$ induced by the vertices $V(G)\setminus U$ as $G\setminus U$. For a 
digraph %
$G$, %
we write $\concom{G}$ to denote  %
the set of all connected components
in the underlying undirected graph of $G$.  %
By
the  %
\emph{underlying undirected graph}, we mean the graph formed by %
symmetrizing the adjacency %
relation.  %
To do this, we consider each directed edge in the digraph and create an undirected edge between the corresponding vertices.
Henceforth, whenever we talk about connected components, we will mean the connected components of the underlying undirected graph.
In a \emph{drawing} of a graph 
in the plane
we map each vertex to
a  %
point %
in  %
the plane and each edge to a simple arc whose endpoints coincide with the
images  %
of the end vertices of the edge.
Moreover, 
the image of a vertex does not belong to the interior of an arc corresponding to any edge. %
We say that a graph is \emph{planar} if there exists a drawing of that graph on a plane such that no two
arcs corresponding to the edges %
of the graph %
intersect except at the endpoints. %
Such a drawing is called a \emph{planar embedding}.%

A \emph{planar digraph} is a digraph whose underlying undirected graph is planar. Similarly, a \emph{drawing of a digraph} is defined as the drawing of its underlying undirected graph. %

We will represent a planar graph by describing the
cyclic  %
ordering of a graph's edges around each vertex. %
We note that %
the results in %
\cite{AllenderMahajan} and \cite{Reingold} %
combine to %
a \emph{deterministic logarithmic space} algorithm that decides
whether %
a given graph is planar, and if it is, outputs a planar embedding. Hence, when dealing with planar graphs, we will assume without loss of generality that we have a planar embedding whenever required.   %
An  %
$m\times m$
\emph{grid digraph} %
is a directed graph %
whose 
set of vertices is %
$[m] \times [m] = \{0,1,\ldots,m\}\times\{0,1,\ldots,m\}$ so that if $((i_1,j_1),(i_2,j_2)) $ is an edge then $|i_1-i_2|+|j_1-j_2|=1$.  
In other words, we start with an undirected $m\times m$ grid graph. Then, we
remove some of the edges and assign directions to the remaining edges while keeping all the vertices. This process results in the formation of an $m\times m$ grid digraph, which has a total of $n=(m+1)^2$ vertices. %
 It follows from the definition that grid 
digraphs %
are %
planar.  %

We will work with $m\times m$ grid 
digraphs %
where $m = O(n^{1/2})$. Hence, a
space complexity  %
of $O(m^{1/2 + \epsilon})$ will translate into a space complexity of $O(n^{1/4 + \epsilon'})$ as required.

\section{Auxiliary graph}
\label{sec:aux}
Let $G$ be an $m \times m$ grid 
digraph. %
We divide $G$ into $m^{2\alpha}$ subgrids such that each subgrid is a $m^{1-\alpha} \times m^{1-\alpha}$ grid. Formally, for 
$0 < \alpha < 1$ and %
$1 \leq i,j \leq m^{\alpha}$, the $(i,j)$-th \emph{subgrid} of $G$,
denoted %
$G[i,j]$,  %
is the %
sub-digraph %
of $G$ induced by the set of vertices, $V(G[i,j])$, where $V(G[i, j])$
is %
\[\{(i',j') \mid (i-1)\cdot m^{1-\alpha} \leq i' \leq i\cdot m^{1-\alpha} \textrm{ and } (j-1)\cdot m^{1-\alpha} \leq j' \leq j\cdot m^{1-\alpha} \}\,.\]

For ease of presentation, we will assume without loss of generality that variables like $m$ and $\alpha$ are such that the values like $m^{\alpha}$ and $m^{1-\alpha}$ are integers. %

We define %
the auxiliary 
digraph %
$\aux{G}[i,j]$ as follows. The vertex set of $\aux{G}[i,j]$ is the set of vertices on the \emph{boundary} of $G[i, j]$. 
That means, $V(\aux{G}[i,j])$ is
\begin{align*}
& \{(i',j') \mid (i',j') \in V(G[i,j]) \textrm{ and }\\
& \ \ ((i-1)\cdot m^{1-\alpha} = i' \textrm{ or } i\cdot m^{1-\alpha} = i' \textrm{ or } (j-1)\cdot m^{1-\alpha} = j' \textrm{ or } j\cdot m^{1-\alpha} = j') \}\,.
\end{align*}
For two vertices $u,v$ in $\aux{G}[i,j]$, $(u,v)$ is an edge in $\aux{G}[i,j]$ if there is a path from $u$ to $v$ in the subgrid $G[i,j]$. 
We draw $\aux{G}[i, j]$ on an Euclidean plane by mapping vertex $(x,y)$ to the point with %
coordinates  %
$(x, y)$. The points corresponding to vertices of $\aux{G}[i,j]$ thus lie on a square. We %
use a straight line %
segment  %
to represent the edge if $u$ and $v$ do not lie on a single side
of this square.  %
and an arc 
 inside the
square %
to represent it otherwise.  
We define the %
$\alpha$-\emph{auxiliary digraph}, %
$\aux{G}$ as follows. The vertex set of $\aux{G}$, $V(\aux{G}) = \{(i,j) \mid i = k\cdot m^{1-\alpha} \textrm{ or } j = l\cdot m^{1-\alpha}, \textrm{ such that } 0 \leq k,l \leq m^{\alpha} \}$. The edges of $\aux{G}$ are the edges of $\aux{G}[i,j]$ taken over all pairs $(i,j)$. Note that $\aux{G}$ might have parallel edges, since an edge on a side of a block might be %
in the adjacent block as well. In such cases we preserve both the edges,
in their different blocks of $\aux{G}$ in the drawing of $\aux{G}$ on the plane. \expref{Figure}{fig:auxG} contains an example of a grid 
digraph %
partitioned into subgrids and its corresponding auxiliary 
digraph. %
Since each block $\aux{G}[i,j]$ contains $4m^{1-\alpha}$ vertices, the total number of vertices in $\aux{G}$ %
is  %
at most $4m^{1 + \alpha}$.

Our algorithm for reachability first constructs $\aux{G}$ by solving each of the $m^{1-\alpha} \times m^{1-\alpha}$ grids recursively. It then uses a polynomial-time subroutine to decide reachability in $\aux{G}$. Note that we do not store 
the 
digraph %
$\aux{G}$ explicitly since that would require too much space. Rather we solve a subgrid recursively whenever the subroutine queries for an edge in that subgrid of $\aux{G}$.

Our strategy is to develop a polynomial-time %
algorithm which solves reachability in $\aux{G}$ using $\tilde{O}(\tilde{m}^{1/2 + \beta/2})$ space where $\tilde{m}$ is the number of vertices in $\aux{G}$. As discussed earlier, $\tilde{m}$ %
is %
 at most $4m^{1 + \alpha}$. Hence, the main algorithm requires 
$\tilde{O}(m^{1/2 + \beta/2 + \alpha/2 + \alpha\beta/2})$ space. For a fixed constant $\epsilon > 0$, we can pick $\alpha > 0$ and $\beta > 0$ such that the space complexity becomes $O(m^{1/2 + \epsilon})$.

\begin{figure}[ht]
\centering
\begin{subfigure}{0.4\textwidth}
\begin{tikzpicture}[scale = 0.3]

\foreach \x in {0,...,12}
\foreach \y in {0,...,12}
\filldraw (\x,\y) circle (2pt);
\draw[step=4] (0,0) grid (12,12);

\foreach \x in {0,...,4}
\foreach \y in {0,...,4}
\filldraw (\x,\y) circle (2pt);
\draw[step=4] (0,0) grid (4,4);

\draw[->] (0,1) -- (1, 1);
\draw[->] (1,1) -- (2, 1);
\draw[->] (2,1) -- (3, 1);

\draw[->] (0,2) -- (1, 2);
\draw[->] (1,2) -- (2, 2);
\draw[->] (2,2) -- (3, 2);

\draw[->] (3,1) -- (3, 2);
\draw[->] (3,2) -- (3, 3);

\draw[->] (3,3) -- (4, 3);

\draw[->] (2,1) -- (2, 2);
\draw[->] (2,2) -- (2, 3);
\draw[->] (2,3) -- (2, 4);

\draw[->] (4,3) -- (5,3);
\draw[->] (5,3) -- (6,3);
\draw[->] (6,3) -- (7,3);

\draw[->] (6,3) -- (6,2);
\draw[->] (6,2) -- (6,1);
\draw[->] (6,1) -- (6,0);

\draw[->] (7,3) -- (7,4);

\draw[->] (6,1) -- (7,1);
\draw[->] (7,1) -- (8,1);

\draw[->] (8,1) -- (9,1);
\draw[->] (9,1) -- (9,2);

\draw[->] (9,2) -- (9,3);
\draw[->] (9,3) -- (10,3);
\draw[->] (10,3) -- (10,2);
\draw[->] (10,2) -- (11,2);
\draw[->] (11,2) -- (11,3);
\draw[->] (11,3) -- (11,4);

\draw[->] (2,4) -- (2,5);
\draw[->] (2,5) -- (2,6);
\draw[->] (2,6) -- (2,7);
\draw[->] (2,7) -- (2,8);
\draw[->] (2,6) -- (3,6);
\draw[->] (3,6) -- (4,6);

\draw[->] (4,6) -- (5,6);
\draw[->] (5, 6) -- (6,6);
\draw[->] (6,6) -- (6,7);
\draw[->] (6,7) -- (7,7);
\draw[->] (7,7) -- (8,7);
\draw[->] (6,8) -- (6,7);
\draw[->] (8,5) -- (7,5);
\draw[->] (7,5) -- (8,5); %

\draw[->] (7,5) -- (7,4);
\draw[->] (11,4) -- (11,5);
\draw[->] (11,5) -- (10, 5);
\draw[->] (10,5) -- (9,5);
\draw[->] (9,5) -- (8,5);
\draw[->] (10,5) -- (10,6);
\draw[->] (10,6) -- (10,7);
\draw[->] (10,7) -- (10,8);

\draw[->] (8,7) -- (9,7);
\draw[->] (9,7) -- (10, 7);
\draw[->] (10, 7) -- (11,7);
\draw[->] (11, 7) -- (11,6);
\draw[->] (11,6) -- (12,6);

\draw[->] (0, 10) -- (1, 10);
\draw[->] (1, 10) -- (1, 9);
\draw[->] (1,9) -- (2,9);
\draw[->] (2,9) -- (3,9);
\draw[->] (3,9) -- (3, 10);
\draw[->] (3,10) -- (4,10);
\draw[->] (2,8) -- (2,9);

\draw[->] (4,10) -- (5,10);
\draw[->] (5,10) -- (5,9);
\draw[->] (5, 9) -- (6,9);
\draw[->] (6,9) -- (6,8);
\draw[->] (6,8) -- (6,9); %

\draw[->] (6,10) -- (6,11);
\draw[->] (5,10) -- (6,10);
\draw[->] (6,11) -- (7,11);

\draw[->] (8,9) -- (7,9);
\draw[->] (7,9) -- (7,10);
\draw[->] (7,10) -- (7,11);
\draw[->] (7,11) -- (8,11);

\draw[->] (8,9) -- (9,9);
\draw[->] (9,9) -- (9,10);
\draw[->] (9,10) -- (9,11);
\draw[->] (9,11) -- (8,11);

\draw[->] (10,8) -- (10,9);
\draw[->] (10,9) -- (10,10);
\draw[->] (10,10) -- (10, 11);
\draw[->] (10, 11) -- (11,11);
\draw[->] (11,11) -- (12,11);

\end{tikzpicture}
\end{subfigure}
\begin{subfigure}{0.4\textwidth}
\centering
\begin{tikzpicture}[scale = 0.3]
\foreach \x in {0,...,12}{
\filldraw (\x,0) circle (2pt);
\filldraw (\x,4) circle (2pt);
\filldraw (\x,8) circle (2pt);
\filldraw (\x,12) circle (2pt);
}
\foreach \y in {0,...,12}
{
\filldraw (0,\y) circle (2pt);
\filldraw (4,\y) circle (2pt);
\filldraw (8,\y) circle (2pt);
\filldraw (12,\y) circle (2pt);
}
\draw[step=4] (0,0) grid (12,12);
\draw[->] (0,1) -- (4, 3);
\draw[->] (0,2) -- (2, 4);
\draw[->] (0,1) -- (2, 4);
\draw[->] (0,2) -- (4, 3);

\draw[->] (8,9) to[bend right] (8, 11);
\draw[->] (8,9) to[bend left] (8, 11);

\draw[->] (4,3) -- (7,4);
\draw[->] (4,3) -- (6,0);
\draw[->] (4,3) -- (8,1);
\draw[->] (8,1) -- (11,4);
\draw[->] (2,4) -- (2, 8);
\draw[->] (2,4) -- (4,6);
\draw[->] (4,6) -- (8,7);
\draw[->] (6,8) -- (8,7);
\draw[->] (8,5) -- (7,4);
\draw[->] (11,4) -- (8,5);
\draw[->] (11,4) -- (12,6);
\draw[->] (11,4) -- (10,8);
\draw[->] (8,7) -- (12, 6);
\draw[->] (8,7) -- (10,8);

\draw[->] (2,8) -- (4,10);
\draw[->] (0,10) -- (4,10);

\draw[->] (4,10) -- (6,8);
\draw[->] (4,10) -- (8,11);

\draw[->] (10, 8) -- (12,11);

\end{tikzpicture}

\end{subfigure}
\caption{A grid digraph $G$ divided into subgrids and its corresponding auxiliary digraph $\aux{G}$} %
\label{fig:auxG}

\end{figure}

\subsection{Properties of the auxiliary digraph} %
In the following definition, we give ordered labelling to the vertices of a block of the auxiliary 
digraph. %
We define the labelling with respect to some vertex 
in the block.
\begin{definition}
\label{def:perm}
Let $G$ be a $m \times m$ grid 
digraph, %
$\ell = \aux{G}[i,j]$ be a block of $\aux{G}$ and $v = (x, y)$ be a vertex in $\aux{G}[i,j]$. Let $t = m^{1-\alpha}$. We define a cyclic permutation $c_{\ell}$ on the vertex set of $\aux{G}[i,j]$ as follows:
\[ \ac{l}{}{v} = \begin{cases} (x +1 , y) &\text{ if } x < (i + 1)t \textup{ and } y = jt\\
(x, y +1)&\text{ if } x = (i + 1)t \textup{ and } y < (j + 1)t\\
(x-1, y) &\text{ if } x > it \textup{ and } y = (j + 1)t \\
(x, y -1)&\text{ if } x = it \textup{ and } y > jt \end{cases}\]
For any non-negative integer $r$, we define $c_{\ell}^r$ inductively as follows. For $r = 0$, $\ac{l}{r}{v} = v$ and otherwise we have $\ac{l}{r+1}{v} = \ac{l}{}{\ac{l}{r}{v}}$.
\end{definition}

 Note that the permutation $c_{\ell}$ can be seen as a counter-clockwise shift. Also note that for a block $\ell$ and vertices $v$ and $w$ in it, we can write $v$ as $\ac{l}{p}{w}$ where $p$ is smallest non-negative integer for which $\ac{l}{p}{w} = v$. Next we formalize what it means to say that two edges of the auxiliary %
digraph %
cross each other.

\begin{definition}
\label{def:cross}
Let $G$ be a grid 
digraph %
and $\ell$ be a block of $\aux{G}$. For two distinct edges $e$ and $f$ in the block, such that $e = (v, \ac{l}{p}{v})$ and $f = (\ac{l}{q}{v}, \ac{l}{r}{v})$. We say that edges $e$ and $f$ cross each other if $\min(q, r) < p < \max(q, r)$.
\end{definition}

Note the definition of cross given above is symmetric. That is, if edges $e$ and $f$ cross each other then $f$ and $e$ must cross each other as well. For an edge $f = (\ac{l}{q}{v}, \ac{l}{r}{v})$, we define $\overleftarrow{f} = (\ac{l}{r}{v}, \ac{l}{q}{v})$ and call it the \emph{reverse} of $f$. We also note that if $e$ and $f$ cross each other, then $e$ and $\overleftarrow{f}$ also cross each other.

In \expref{Lemma}{lem:cross} we state an equivalent condition of the crossing of two edges. In \expref{Lemmas}{lem:psg} \expref{and}{lem:closer} we state some specific properties of the auxiliary 
digraph %
that we use later.

\begin{lemma}
\label{lem:cross}
Let $G$ be a grid 
digraph %
and $\ell$ be a block of $\aux{G}$. Let $w$ be an arbitrary vertex in the block $\ell$ and $e = (\ac{l}{p}{w}, \ac{l}{q}{w})$ and $f = (\ac{l}{r}{w}, \ac{l}{s}{w})$ be two distinct edges in $\ell$. Then $e$ and $f$ cross each other if and only if either of the following two conditions hold:
\begin{itemize}
\item $\min(p, q) < \min(r, s) < \max(p, q) < \max(r,s)$
\item $\min(r, s) < \min(p, q) < \max(r, s) < \max(p,q)$
\end{itemize}
\end{lemma}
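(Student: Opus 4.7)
The plan is to reduce both characterizations of crossing to the combinatorial fact that two chords of a cycle cross iff their endpoints alternate along the cycle, and to observe that this alternation is invariant under the choice of base vertex. I will first translate Definition \ref{def:cross} into a statement about indices relative to $w$, and then carry out a case analysis on the relative positions of $p, q, r, s$.

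Let $N$ denote the number of vertices on the boundary of block $l$, so $\ac{l}{\cdot}{w}$ is a bijection from $\{0, 1, \ldots, N-1\}$ onto the boundary of $l$. Set $v := \ac{l}{p}{w}$. Applying Definition \ref{def:cross} with base $v$, I rewrite $e = (v, \ac{l}{P}{v})$ and $f = (\ac{l}{Q}{v}, \ac{l}{R}{v})$, where $P, Q, R$ are the smallest non-negative representatives of $q - p$, $r - p$, and $s - p$ modulo $N$. Definition \ref{def:cross} then asserts that $e$ and $f$ cross iff $\min(Q, R) < P < \max(Q, R)$, so the task reduces to showing that this modular condition is equivalent to the disjunction of the two interleaving conditions of the lemma.

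By the symmetries of both statements (swapping the two endpoints of either edge leaves both characterizations unchanged), I may assume $p < q$, whence $P = q - p$. I then split into six subcases based on which of the intervals $[0, p)$, $(p, q)$, $(q, N)$ contain $r$ and $s$ (distinctness of the four indices is used; the degenerate shared-endpoint case makes both sides of the biconditional vacuously false). In the four subcases where neither of $r, s$ or both of $r, s$ lie in $(p, q)$ — namely both in the same interval, or one in $[0, p)$ and the other in $(q, N)$ — one checks that $P$ is either smaller than both of $Q, R$ or larger than both, so the crossing condition fails; simultaneously both interleaving conditions of the lemma fail, since $\{r, s\}$ lies entirely on one cyclic arc bounded by $\{p, q\}$. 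In the remaining two subcases, where exactly one of $r, s$ lies in $(p, q)$, the inequality $\min(Q, R) < P < \max(Q, R)$ holds, and one verifies that the first interleaving condition corresponds to the outside index lying in $(q, N)$, while the second corresponds to it lying in $[0, p)$.

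The main obstacle is only careful bookkeeping of the modular arithmetic: whether $r \geq p$ or $r < p$ shifts $Q$ by $N$ (and similarly for $R$), so each subcase requires substituting the correct non-modular form before evaluating the crossing inequality. There is no deeper difficulty beyond recognizing that both Definition \ref{def:cross} and the conditions of Lemma \ref{lem:cross} encode the same rotation-invariant alternation of chord endpoints along the boundary cycle of $l$.
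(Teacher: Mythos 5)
Your proof is correct and follows essentially the same route as the paper: both arguments re-express $e$ and $f$ relative to a base vertex lying on $e$ and compare the resulting counter-clockwise offsets, the only real difference being that you handle the modulo-$N$ wraparound by an exhaustive six-case analysis of where $r$ and $s$ fall, whereas the paper sidesteps wraparound by taking the base to be the minimum index and treating each direction under a WLOG ordering. One small imprecision: in a shared-endpoint configuration (e.g.\ $r = p$) the two interleaving conditions do fail, but Definition \ref{def:cross} read literally can still declare such edges crossing, so that case is not ``vacuously false on both sides''; since the paper's own proof also silently assumes the four indices are distinct, this does not affect your main argument.
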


\begin{proof}
We prove that if $\min(p, q) < \min(r, s) < \max(p, q) < \max(r,s)$ then $e$ and $f$ cross each other. We let $p < r < q < s$. Other cases can be proved by reversing appropriate edges. We thus have integers $r_1 = r - p$, $q_1 = q - p$ and $s_1 = s - p$. Clearly, $r_1 < q_1 < s_1$. Let $v = \ac{l}{p}{w}$. Thus we have $e = (v, \ac{l}{q}{w}) = (v, \ac{l}{q_1}{\ac{l}{p}{w}}) = (v, \ac{l}{q_1}{v})$ and $f = (\ac{l}{r}{w}, \ac{l}{s}{w}) = (\ac{l}{r_1}{\ac{l}{p}{w}}, \ac{l}{s_1}{\ac{l}{p}{w}}) = (\ac{l}{r_1}{v}, \ac{l}{s_1}{v})$
The proof for the second condition is similar.

Now, we prove that if $e = (\ac{l}{p}{w}, \ac{l}{q}{w})$ and $f = (\ac{l}{r}{w}, \ac{l}{s}{w})$ cross each other then either of the given two condition holds. We assume that $p$ is the smallest integer among $p$, $q$, $r$ and $s$. Other cases can be proved similarly. Now, let $v = \ac{l}{p}{w}$. We thus have integers $q_1 = q-p, r_1 = r-p$ and $s_1 = s-p$ such that $e = (v, \ac{l}{q_1}{v})$ and $f = (\ac{l}{r_1}{v}, \ac{l}{s_1}{v})$. Since $e$ and $f$ cross each other, we have $\min(r_1, s_1) < q_1 < \max(r_1, s_1)$. Thus $\min(r_1+p, s_1+p) < q_1+p < \max(r_1+p, s_1+p)$. It follows that $\min(r, s) < q < \max(r,s)$. Since we assumed $p$ to be smallest integer among $p$, $q$, $r$ and $s$; we have $\min(p, q) < \min(r, s) < \max(p, q) < \max(r,s)$, thus proving the lemma.
\end{proof}

We see that we can draw an auxiliary %
digraph %
on a plane such that the arcs corresponding to two of its edges intersect if and only if the corresponding edges cross each other. Henceforth, we will work with such a drawing.
\begin{definition}
Let $G$ be a grid %
digraph %
 and $\ell$ be a block of $\aux{G}$. For a vertex $v$ and edges $f$, $g$ such that $f = (\ac{l}{q}{v}, \ac{l}{r}{v})$ and $g = (\ac{l}{s}{v}, \ac{l}{t}{v})$, we say that $f$ is closer to $v$ than $g$ if $\min(q, r) < \min(s, t)$.

We say $f$ is \emph{closest} to $v$ if there exists no other edge $f'$ which is closer to $v$ than $f$.
\end{definition}

\begin{lemma}
\label{lem:psg}
Let $G$ be a grid %
digraph %
 and $e_1 = (u_1,v_1)$ and $e_2 = (u_2,v_2)$ be two edges in $\aux{G}$. If $e_1$ and $e_2$ cross each other, then $\aux{G}$ also contains the edges $(u_1, v_2)$ and $(u_2, v_1)$.
\end{lemma}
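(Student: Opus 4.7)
The plan is to unwind the definition of an edge in the auxiliary graph back to a path in the underlying subgrid, and then use planarity of the subgrid to show the two paths must share a vertex.

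First I would observe that for the notion ``$e_1$ and $e_2$ cross'' (Definition~\ref{def:cross}) to be meaningful, the two edges must both live in the same block $l = \aux{G}[i,j]$, since the counter-clockwise ordering is defined relative to a block. By the definition of $\aux{G}[i,j]$, the existence of the edge $e_1 = (u_1,v_1)$ means there is a directed path $P_1$ from $u_1$ to $v_1$ in the subgrid $G[i,j]$, and similarly there is a directed path $P_2$ from $u_2$ to $v_2$ in $G[i,j]$. The goal reduces to exhibiting directed paths from $u_1$ to $v_2$ and from $u_2$ to $v_1$ in $G[i,j]$, because such paths immediately yield the required edges in $\aux{G}[i,j] \subseteq \aux{G}$.

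The main step is to show that $P_1$ and $P_2$ must share at least one vertex in $G[i,j]$. All four endpoints $u_1,v_1,u_2,v_2$ lie on the boundary of the $m^{1-\alpha}\times m^{1-\alpha}$ subgrid, and the combinatorial crossing condition (together with the equivalent form in Lemma~\ref{lem:cross}) says precisely that these four endpoints appear in the cyclic counter-clockwise order on the boundary with $u_1, u_2, v_1, v_2$ interleaved (up to relabelling). Since $G[i,j]$ is a planar graph drawn inside a bounded rectangular region with its boundary vertices on the rectangle, the undirected traces of $P_1$ and $P_2$ form two curves whose endpoints lie on this boundary in interleaved order. By the Jordan curve argument, any curve joining $u_1$ to $v_1$ inside the closed disk bounded by the subgrid boundary must meet any curve joining $u_2$ to $v_2$ whose endpoints are interleaved with $u_1,v_1$. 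Hence $P_1$ and $P_2$ share a common vertex $w$ in $G[i,j]$.

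Once $w$ is identified, the finish is routine: concatenate the prefix of $P_1$ from $u_1$ to $w$ with the suffix of $P_2$ from $w$ to $v_2$ to obtain a directed walk (hence a directed path) from $u_1$ to $v_2$ in $G[i,j]$, and symmetrically combine the prefix of $P_2$ from $u_2$ to $w$ with the suffix of $P_1$ from $w$ to $v_1$. By definition of $\aux{G}[i,j]$, these paths produce the edges $(u_1,v_2)$ and $(u_2,v_1)$ in $\aux{G}[i,j]$, and thus in $\aux{G}$. The main obstacle is the middle step: making the Jordan-curve argument precise enough to conclude that the two directed paths actually meet at a vertex of $G[i,j]$ (rather than merely having their drawings cross at an interior point), which is where one uses that $G[i,j]$ is a grid graph and the endpoints sit on the subgrid boundary in the interleaved order dictated by Definition~\ref{def:cross}.
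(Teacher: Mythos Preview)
Your proposal is correct and follows essentially the same approach as the paper: both arguments translate $e_1$ and $e_2$ back to directed paths $P_1,P_2$ in the subgrid $G[i,j]$, use the interleaved position of their boundary endpoints together with planarity (the paper phrases this as ``$P$ divides the subgrid into two parts'') to force $P_1$ and $P_2$ to share a vertex, and then splice prefixes and suffixes to obtain the two required paths. Your write-up is in fact a bit more careful than the paper's in flagging that the intersection must occur at a vertex rather than at an interior point of an edge, but the underlying argument is the same.
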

\begin{proof}
Let $e_1 = (v, \ac{l}{p}{v})$ and $e_2 = (\ac{l}{q}{v}, \ac{l}{r}{v})$ be two edges that cross each other in $\aux{G}$. Let $\ell$ be the block of $\aux{G}$ to which $e_1$ and $e_2$ belong. Consider the subgrid of $G$ which is solved to construct the block ${\ell}$. Since the edge $e_1$ exists in block ${\ell}$, there exists a path $P$ from $v$ to $\ac{l}{p}{v}$ in the underlying subgrid. This path $P$ divides the subgrid into two parts such that the vertices $\ac{l}{q}{v}$ and $\ac{l}{r}{v}$ belong to different parts of the subgrid. Thus, a path between $\ac{l}{q}{v}$ and $\ac{l}{r}{v}$ necessarly take a vertex of path $P$. Hence, there is a path from $v$ to $\ac{l}{r}{v}$ and a path from $\ac{l}{p}{v}$ to $\ac{l}{r}{v}$. Thus the lemma follows.
\end{proof}

\begin{figure}[ht]
\begin{center}
\includegraphics[width=0.3\textwidth]{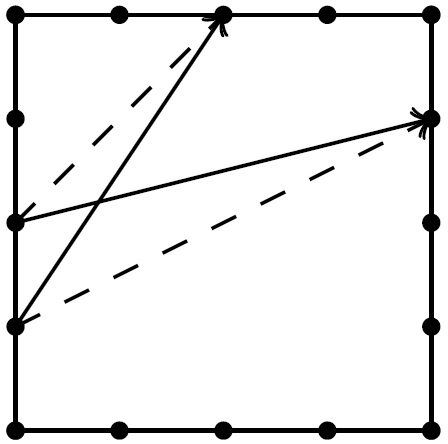}
\hfill
\includegraphics[width=0.3\textwidth]{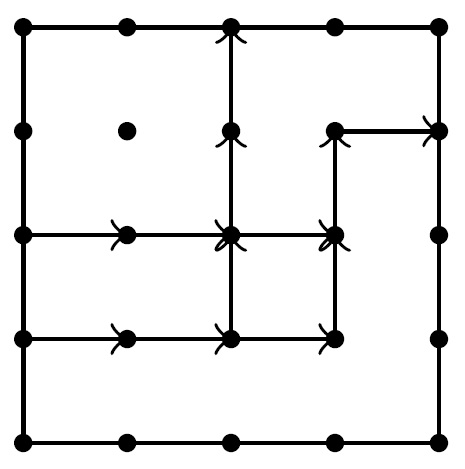}

\end{center}
\caption{Edge crossings in an auxiliary grid. On the left, there is one block of the auxiliary %
digraph %
 that contains edges that cross. The dotted edges are the ones whose existence is made necessary by \expref{Lemma}{lem:psg}. On the right, a subgrid which results in the auxiliary block on the left.}
\end{figure}

\begin{lemma}
\label{lem:closer}
Let $G$ be a grid %
digraph %
 and $e_1 = (u_1,v_1)$ and $e_2 = (u_2,v_2)$ be two edges in $\aux{G}$. If $e_1$ and $e_2$ cross a certain edge $f = (x,y)$, and $e_1$ is $closer$ to $x$ than $e_2$, then the edge $(u_1, v_2)$ is also %
 in $\aux{G}$.
\end{lemma}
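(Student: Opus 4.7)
The plan is to work algebraically with the counter-clockwise ordering on the block $l$, taking $x$ as the base vertex, so that all the hypotheses become inequalities on integer positions. Write $y = \ac{l}{p}{x}$ and, for $i \in \{1,2\}$, let $L_i < R_i$ be the two positions of the endpoints of $e_i$ with respect to $x$. Since $e_i$ crosses $f$, Lemma~\ref{lem:cross} (applied with $f$'s endpoints at positions $0$ and $p$) forces $0 < L_i < p < R_i$: the other of the two crossing patterns would require $L_i < 0$. The ``closer to $x$'' hypothesis translates into $L_1 < L_2$. So the problem reduces to reasoning about two chords whose positions satisfy $L_1 < L_2 < p$ together with $p < R_1$ and $p < R_2$.

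I then split on the relative order of $R_1$ and $R_2$. If $R_1 < R_2$, the four positions interleave as $L_1 < L_2 < R_1 < R_2$, so Lemma~\ref{lem:cross} gives that $e_1$ and $e_2$ themselves cross, and a single application of Lemma~\ref{lem:psg} to $e_1$ and $e_2$ immediately delivers $(u_1, v_2) \in E(\aux{G})$. If instead $R_2 \le R_1$, then $e_1$ and $e_2$ are nested and need not cross, so I route through the intermediate edge $(u_1, y)$: Lemma~\ref{lem:psg} applied to $e_1$ and $f$ places $(u_1, y)$ in $\aux{G}$; a short case split on whether $u_1$ sits at position $L_1$ or at position $R_1$ shows that the four positions of $(u_1, y)$ and $e_2$ match one of the two crossing patterns of Lemma~\ref{lem:cross} (the orderings that arise are $L_1 < L_2 < p < R_2$ and $L_2 < p < R_2 < R_1$, respectively); a second application of Lemma~\ref{lem:psg}, now to $(u_1, y)$ and $e_2$, then produces $(u_1, v_2) \in E(\aux{G})$.

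The main obstacle is precisely the nested case, where $e_1$ and $e_2$ themselves do not cross and Lemma~\ref{lem:psg} cannot be invoked in a single shot; the whole role of the closeness hypothesis is to guarantee that the intermediate edge $(u_1, y)$ does cross $e_2$, regardless of which endpoint of $e_1$ was labelled $u_1$. Apart from this, no deeper argument is required, since Lemmas~\ref{lem:cross} and~\ref{lem:psg} already encapsulate the planar reasoning inside the underlying subgrid. Degenerate configurations in which some endpoints coincide (most notably $R_1 = R_2$) either make $(u_1, v_2)$ coincide with an edge already known to be in $\aux{G}$ or reduce it to a trivial self-loop, and need only cursory attention.
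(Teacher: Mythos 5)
Your argument is correct and follows essentially the same route as the paper's proof: case~1 ($R_1<R_2$, i.e.\ $e_1$ and $e_2$ cross) is handled by one application of Lemma~\ref{lem:psg}, and the nested case goes through the intermediate edge $(u_1,y)$ obtained from Lemma~\ref{lem:psg} applied to $e_1$ and $f$, which is then shown to cross $e_2$ so that a second application of Lemma~\ref{lem:psg} yields $(u_1,v_2)$. Your write-up is in fact slightly more careful than the paper's (which glosses over whether $u_1$ is the near or far endpoint and over coincident endpoints), but the decomposition and the key steps are identical.
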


\begin{figure}
\centering
\includegraphics[width=0.9\textwidth]{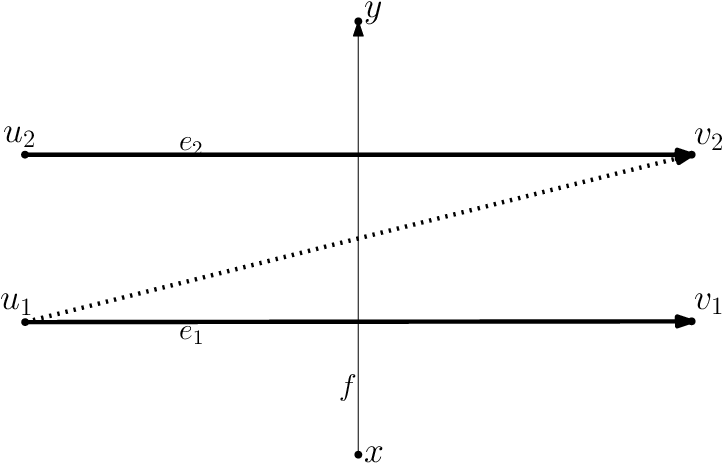}

\caption{Illustration of \expref{Lemma}{lem:closer}}
\end{figure}

\begin{proof}
Let $f = (v, c_{\ell}^p(v))$, $e_1 = (c_{\ell}^q(v), c_{\ell}^r(v))$ and $e_2 = (c_{\ell}^s(v), c_{\ell}^t(v))$. If $c_{\ell}^q(v) = c_{\ell}^s(v)$ then the lemma trivially follows. Otherwise, we have two cases to consider:
\begin{description}
\item[Case 1 ($e_1$ crosses $e_2$):] In this case, we will have $(c_{\ell}^q(v), c_{\ell}^t(v))$ %
in $\aux{G}$ by \expref{Lemma}{lem:psg}.
\item[Case 2 ($e_1$ does not cross $e_2$):] In this case, we have $\min(q,r) < \min(s,t) < p < \max(s, t) < \max(q, r)$. Since $e_1$ crosses $f$, we have the edge $(c_{\ell}^q(v), c_{\ell}^p(v))$ in $\aux{G}$ by 
\expref{Lemma}{lem:psg}. This edge will cross $e_2$. Hence $(c_{\ell}^q(v), c_{\ell}^t(v))$ is %
 in $\aux{G}$.\qedhere
\end{description}
\end{proof}

\section{Pseudoseparators in a grid digraph}  %
\label{sec:pseudosep}
Imai et al. used a separator construction to solve the reachability problem in planar %
digraphs %
 \cite{Imai}. A separator is a set of vertices whose removal disconnects the graph into \emph{small components}. The class of grid %
digraphs %
is a subclass of planar %
digraphs. %
Grid graphs are known to have small separators. However, for a grid %
digraph %
 $G$, the %
underlying undirected graph of %
 $\aux{G}$ might not have a small separator.

An essential property of a separator is that, for any two vertices, a path between the vertices must contain a separator vertex if the vertices lie in two different components with respect to the separator. This property can then be used to design a divide and conquer %
algorithm for reachability where only separator vertices need to be marked and stored for traversal. \expref{Lemma}{lem:closer} allows us to use edges as well since, at most, one vertex on each side of an edge needs to be stored for traversal (the visited-vertex closest to the tail of the edge). Hence, we can use a slightly weaker structure in place of a separator. We construct a structure called pseudoseparator (see \expref{Definition}{def:psc}). It is essentially a set of vertices and edges that can be used to divide the %
digraph %
into components, such that a path from one component to another either takes a vertex of the pseudoseparator or crosses an edge of the pseudoseparator.

For a %
digraph %
$H = (V_1, E_1)$ given along with its drawing,
and a  %
sub-digraph %
$C = (V_2, E_2)$ of $H$, define the 
digraph %
$\sep{H}{C} = (V_3, E_3)$ as $V_3 = V_1 \setminus V_2$ and $E_3 = E_1 \setminus \{e \in E_1 \mid \exists f \in E_2, e \textrm{ crosses } f\}$. We note that the %
digraph %
$H$ we will be working with throughout the article will be a  %
sub-digraph %
 of an auxiliary %
digraph. %
Hence it will always come with a drawing.

\begin{definition}
\label{def:psc}
Let $G$ be a grid %
digraph %
and $H$ be
an induced  %
sub-digraph %
 of $\aux{G}$ with $h$ vertices. Let $f:\mathbb{N} \rightarrow \mathbb{N}$ be a function. A  %
sub-digraph %
 $C$ of $H$ is said to be an $f(h)$-$\psc$ of $\aux{G}$ if the size of every connected component in $\concom{\sep{H}{C}}$ is at most $f(h)$. The size of $C$ is the
number of vertices plus the number of edges of $C$.
\end{definition}

For
an induced%
sub-digraph %
 $H$ of $\aux{G}$, an $f(h)$-$\psc$ is a  %
sub-digraph %
 $C$ of $H$ that has the property that, if we remove the vertices as well as all the edges that cross one of the edges of the $\psc$, the %
digraph %
 gets disconnected into small pieces. Moreover for every edge $e$ in $H$, if there exists distinct sets $U_1$ and $U_2$ in $\concom{\sep{H}{C}}$ such that one of the endpoints of $e$ is in $U_1$ and the other is in $U_2$, then there exists an edge $f$ in $C$ such that $e$ crosses $f$. Hence any path in $H$ that connects two vertices in different components of $H \diamond C$ must either contain a vertex of $C$ or must contain an edge that crosses an edge of $C$. We divide the %
digraph %
 using this $\psc$ and give an algorithm that recursively solves each  %
sub-digraph %
 and then combines their solution efficiently.
\subsection{Constructing a pseudoseparator}
Before working out the details below, we briefly comment on how to construct a $\psc$ of
an induced%
sub-digraph %
 $H$ of $\aux{G}$. Ideally, we would have liked to have a set of cycles of $H$
such that the surfaces obtained by cutting along these cycles have %
a bounded number of vertices 
on it.%
However, such cycles might not exist. We instead pick a maximal subset of edges from $H$ so that no two edges cross (see \expref{Definition}{def:planar} and \expref{Lemma}{lem:planarG}). Then we triangulate the resulting graph. We show that this triangulated graph has the required cycles and can be found using Imai et al.'s separator algorithm. Call the triangulated graph %
$\trplanar{H}$ and the separator vertices as $S$. Since the cycles might have triangulated edges, we will add at most a constant number of vertices and edges from $H$ to \emph{shield} these triangulated edges. The vertex set of $\psc$ of $H$ will thus contain all the vertices of $S$ and at most four additional vertices for each edge of $\trplanar{H}[S]$ that is not %
 in $H$. The edge set of $\psc$ of $H$ will contain all the edges of $H$ which are also in $\trplanar{H}[S]$ and at most four additional edges for each edge of $\trplanar{H}[S]$ that is not %
 in $H$.

\begin{definition}
\label{def:planar}
Let $G$ be a grid %
digraph %
 and $H$ be
an induced  %
sub-digraph %
 of $\aux{G}$. We define $\mxplanar{H}$ as a  %
sub-digraph %
 of $H$. The vertex set of $\mxplanar{H}$ is same as that of $H$. For an edge $e \in H$, let ${\ell}$ be the block to which $e$ belongs and let $w$ be the lowest indexed vertex in that block. Then $e = (c^i_{\ell}(w), c^j_{\ell}(w))$ is in $\mxplanar{H}$ if there exists no other edge $f = (c^x_{\ell}(w), c^y_{\ell}(w))$ in $H$ such that $\min(x,y) < \min(i, j) < \max(x, y) < \max(i, j)$.
\end{definition}

In \expref{Lemma}{lem:planarG} we show that the %
digraph %
 $\mxplanar{H}$ is indeed planar. We also prove that the subset of edges that we have picked from $H$ is a maximal subset such that no two edges cross.

\begin{lemma}
\label{lem:planarG}
Let $G$ be a grid %
digraph %
and $H$ be
 an induced  %
sub-digraph %
 of $\aux{G}$. No two edges of $\mxplanar{H}$ cross each other. Moreover, for any edge $e$ in $H$ that is not in $\mxplanar{H}$, there exists another edge in $\mxplanar{H}$ that crosses $e$.
\end{lemma}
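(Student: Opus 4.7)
The plan is to prove both assertions by contradiction, using Lemma~\ref{lem:cross} to rephrase ``crossing'' as a strict interleaving of counter-clockwise indices and Definition~\ref{def:planar} to rephrase ``not in $\mxplanar{H}$'' as the existence of a specific one-sided interleaving, which I will call an \emph{excluding} edge. Part~1 will then be immediate: if two edges $e_1, e_2 \in \mxplanar{H}$ crossed, Lemma~\ref{lem:cross} would yield one of the two strict interleavings of their index sets, and that interleaving is precisely the hypothesis used in Definition~\ref{def:planar} to exclude the edge with the larger minimum index from $\mxplanar{H}$; hence one of $e_1, e_2$ could not lie in $\mxplanar{H}$, a contradiction.

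For part~2, I fix $e \in H \setminus \mxplanar{H}$ with counter-clockwise indices $i < j$ relative to the lowest-indexed vertex $w$ of its block $l$. Because $e$ is excluded, some edge of $H$ witnesses the exclusion, and by Lemma~\ref{lem:cross} this witness crosses $e$ with minimum index strictly less than $i$; I call such edges \emph{left-crossing} of $e$ and let $S_e \subseteq H$ denote the set of all of them, so $S_e \neq \emptyset$. I then choose $f^* \in S_e$ whose minimum index $x^*$ is smallest, and write $y^* = \max(f^*)$, so $x^* < i < y^* < j$. My aim is to show $f^* \in \mxplanar{H}$, which finishes the proof since $f^* \in S_e$ crosses $e$. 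Suppose not: $f^*$ is excluded by some $g \in H$ with $a = \min(g) < x^*$ and $b = \max(g) \in (x^*, y^*)$. If $g$ crossed $e$, it would lie in $S_e$ with minimum index $a < x^*$, contradicting the choice of $f^*$. Hence $g$ does not cross $e$, and a short check using $a < x^* < i$ and $b < y^* < j$ forces $b \le i$, placing $g$'s interval entirely to the left of $e$'s. Applying Lemma~\ref{lem:psg} to the crossing pair $(g, f^*)$ produces cross-edges of $\aux{G}$; because $H$ is vertex-induced and all four endpoints lie in $V(H)$, these cross-edges lie in $H$. The cross-edge pairing the endpoint of $g$ at index $a$ with the endpoint of $f^*$ at index $y^*$ has interval $[a, y^*]$ satisfying $a < i < y^* < j$, so it lies in $S_e$ with minimum index $a < x^*$---contradicting the minimality of $x^*$.

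The main technical obstacle is that Lemma~\ref{lem:psg} furnishes two specific \emph{directed} cross-edges whose endpoint pairings are dictated by the orientations of $g$ and $f^*$ as directed edges of $\aux{G}$; for some orientation combinations the favourable pair $\{a, y^*\}$ does not appear and one only obtains the pair $\{b, y^*\}$, whose minimum index exceeds $x^*$ and yields no immediate contradiction. I expect to handle this either by invoking Lemma~\ref{lem:closer} (applied to $g$, $f^*$, and one of the cross-edges supplied by Lemma~\ref{lem:psg}) to force the desired pairing, or by replacing the extremal choice of $f^*$ by one under a more refined well-order---for instance minimising $y^*$ first and breaking ties by $x^*$---so that whichever of the two cross-edges Lemma~\ref{lem:psg} returns is strictly smaller than $f^*$ in the chosen well-order.
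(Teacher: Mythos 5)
Your proof of the first assertion is correct and is essentially the paper's argument. For the second assertion you take a genuinely different route: the paper picks a \emph{globally} extremal edge $e$ (one with minimal $\min$-index among all edges of $H$ that are neither in $\mxplanar{H}$ nor crossed by an edge of $\mxplanar{H}$) and then works through an explicit four-way case analysis on the orientations of a three-edge configuration $e,f,g$; you instead fix $e$ and make an extremal choice $f^*$ inside its set of left-crossing edges. Your route is cleaner in outline, but the obstacle you flag at the end is not a technicality --- it is the whole difficulty, it is precisely what the paper's orientation case analysis exists to handle, and neither of your proposed repairs closes it. Concretely: writing $g$ and $f^*$ as directed edges, Lemma \ref{lem:psg} only supplies the pairings (tail of one, head of the other). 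When $g$ runs from index $a$ to index $b$ and $f^*$ from $y^*$ to $x^*$ (or both are reversed), the two edges you obtain have index sets $\{a,x^*\}$ and $\{b,y^*\}$: the first has maximum index $x^*<i$ and so does not cross $e$ at all, while the second, when $b<i$, is a left-crossing edge of $e$ whose minimum index $b$ is \emph{larger} than $x^*$ (and when $b=i$ it shares an endpoint with $e$ and does not cross it). So the favourable pair $\{a,y^*\}$ simply need not exist in $H$.

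Neither repair works as stated. No lexicographic order on $(\max,\min)$ or $(\min,\max)$ can succeed: in the favourable orientations Lemma \ref{lem:psg} yields $\{a,y^*\}$ (same maximum as $f^*$, smaller minimum), in the unfavourable ones it yields $\{b,y^*\}$ (same maximum, larger minimum), and since the orientations are fixed by the graph rather than chosen by you, any tie-breaking rule that makes one of these ``smaller than $f^*$'' fails on the other. The appeal to Lemma \ref{lem:closer} is also not enough as described: that lemma again returns only the pairing (tail of the closer edge, head of the farther edge), and in the configurations available here ($g$ and $e$ both crossing $f^*$, or $f^*$ and $\{b,y^*\}$ both crossing $e$) the edge it produces either shares an endpoint with $e$ or reproduces an edge you already have; you would need to exhibit a specific instantiation that outputs $\{a,y^*\}$, and I do not see one. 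To finish along your lines you would need either an orientation-independent strengthening of Lemma \ref{lem:psg}, or a more global induction (for instance on the minimum index over \emph{all} excluded-and-uncrossed edges, as in the paper) so that the leftover configuration in the unfavourable case can be fed back into the induction hypothesis rather than having to contradict the extremality of $f^*$ directly.
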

\begin{proof}
Let ${\ell}$ be a block of $\aux{G}$ and $w$ be the smallest index vertex of ${\ell}$. Let $e = (c^p_{\ell}(w), c^q_{\ell}(w))$ and $f = (c^r_{\ell}(w), c^s_{\ell}(w))$ be two edges of $H$ that cross. We have, by \expref{Lemma}{lem:cross}, that either $\min(p,q) < \min(r, s) < \max(p, q) < \max(r, s)$ or $\min(r,s) < \min(p, q) < \max(r, s) < \max(p, q)$. Hence, by our construction of $\mxplanar{H}$, atmost one of $e$ and $f$ belongs to it. Thus no two edges of $\mxplanar{H}$ %
 cross.

For the second part, we will prove by contradiction. Let us assume that there exists edges in $H$ which is not in $\mxplanar{H}$ and also not crossed by an edge in $\mxplanar{H}$. We pick edge $e = (c_{\ell}^p(w), c_{\ell}^q(w))$ from them such that $\min(p, q)$ of that edge is minimum. Since this edge is not %
 in $\mxplanar{H}$, we have by definition, an edge $f = (c_{\ell}^r(w), c_{\ell}^s(w))$ such that $\min(r,s) < \min(p, q) < \max(r, s) < \max(p, q)$. We pick the edge $f$ for which $\min(r,s)$ is minimum. Now, since this edge $f$ is not %
 in $\mxplanar{H}$, we have another edge $g = (c_{\ell}^i(w), c_{\ell}^j(w))$ in $\mxplanar{H}$ such that $\min(i,j) < \min(r, s) < \max(i, j) < \max(r, s)$. We pick $g$ such that $\min(i, j)$ is minimum and break ties by picking one whose $\max(i,j)$ is maximum. Now, we have the following cases:
\begin{description}
\item[Case 1 ($i < r < j < s$):] Note that $H$ is
an induced  %
sub-digraph %
 of $\aux{G}$. Thus, in this case, the edge $(c_{\ell}^i(w), c_{\ell}^s(w))$ will be %
 in $H$ due to \expref{Lemma}{lem:psg}. Since $i < p < s < q$, and $i< \min(r, s)$, this will contradict the way in which edge $f$ was chosen.
\item[Case 2 ($i < s < j < r$):] In this case, the edge $(c_{\ell}^r(w), c_{\ell}^j(w))$ will be %
 in $H$ due to \expref{Lemma}{lem:psg}. This edge will cross $e$ and hence not be %
 in $\mxplanar{H}$. Thus, we have an edge $g' = (c_{\ell}^{i'}(w), c_{\ell}^{j'}(w))$ in $\mxplanar{H}$ such that $\min(i', j') < j < \max(i',j') < r$. We will thus have two subcases.
\begin{description}
\item[Case 2a ($i < \min(i', j')$):] Here, we will have $i < \min(i', j') < j < \max(i', j')$. Hence this edge will cross $g$ giving a contradiction to the first part of this Lemma.
\item[Case 2b ($\min(i', j') \leq i$):] Here, this edge should have been chosen instead of $g$ contradicting our choice of $g$.
\end{description}
\end{description}
The analysis of two remaining cases where $j<s<i<r$ and $j < r < i < s$ are similar to Cases $1$ and $2$ respectively.
\end{proof}

\begin{definition}
Let $G$ be a grid %
digraph %
and $H$ be
an induced  %
sub-digraph %
 of $\aux{G}$. The graph $\trplanar{H}$ is formed by first running \expref{Algorithm}{alg:completeboundary} on $H$ and then triangulating the %
underlying undirected graph of the result%
\end{definition}

\begin{algorithm}
\LinesNumbered
\caption{Completing the boundary of $\mxplanar{H}$}
\label{alg:completeboundary}
\KwIn{ 
An %
induced  %
sub-digraph %
 $H$ of $\aux{G}$}

Output all the vertices of $H$\;
Output all the edges of $\mxplanar{H}$\;
\ForEach{block $\ell$ in $H$}{
	\ForEach{vertex $v$ of $H$ in $\ell$}{
		 $p \gets$ the smallest positive integer such that $c_{\ell}^p(v) \in V(H)$\;
		\If{$(v, c_{\ell}^p(v)) \notin E(\mxplanar{H})$}{
			Output the edge $(v, c_{\ell}^p(v))$\;
		} 
}
}
\end{algorithm}

Note that \expref{Algorithm}{alg:completeboundary} completes the boundary of each block. Thus, for any block, the vertices in its boundary form a simple cycle. Since the interiors of these cycles are disjoint, there is no edge that is drawn through more than one block in $\trplanar{H}$. Thus, each edge of $\trplanar{H}$ %
lies  %
either entirely inside \emph{one} of the blocks, or
lies  %
completely outside the whole $m \times m$ grid.

Now, for each of those edges added to $\trplanar{H}$ as part of triangulation that is inside one of the blocks, we define a set of at most four \emph{shield} edges.

\begin{definition}Let $G$ be a grid %
digraph %
 and $H$ be %
an induced  %
sub-digraph %
 of $\aux{G}$. Let $e = (v, w)$ be an edge in $\trplanar{H}$ such that $e$ is inside one of the blocks and $e$ is not %
 in $\mxplanar{H}$. Let $p$ and $q$ be integers such that $w = c_{\ell}^{p}(v)$ and $v = c_{\ell}^q(w)$.

\begin{itemize}
\item Let $p_1$ be the largest integer such that $p_1 < p$ and an edge $e_1$ with endpoints $v$ and $c_{\ell}^{p_1}(v)$ exists in $\mxplanar{H}$. $e_1$ is undefined if no such integer exists.
\item Let $p_2$ be the smallest integer such that $p_2 > p$ and an edge $e_2$ with endpoints $v$ and $c_{\ell}^{p_2}(v)$ exists in $\mxplanar{H}$. $e_2$ is undefined if no such integer exists.
\item Let $q_1$ be the largest integer such that $q_1 < q$ and an edge $e_3$ with endpoints $c_{\ell}^{q_1}(w)$ and $w$ exists in $\mxplanar{H}$. $e_3$ is undefined if no such integer exists.
\item Let $q_2$ be the smallest integer such that $q_2 > q$ and an edge $e_4$ with endpoints $c_{\ell}^{q_2}(w)$ and $w$ exists in $\mxplanar{H}$. $e_4$ is undefined if no such integer exists.
\end{itemize}
For $i = 1,2,3,4$, the edges $e_i$ which are defined above are called \emph{shield} edges of $e$.
\end{definition}

We will be using the following Lemma, which was proven by Imai et al., to help us in the construction of $\psc$.

\begin{lemma}\label{lem:seplemma}\cite{Imai}
For every $\beta > 0$, there exists a
polynomial-time   %
and $\tilde{O}(h^{1/2 + \beta/2})$-space %
algorithm that takes a $h$-vertex planar graph $P$ as input and outputs a set of vertices $S$, such that $|S|$ is $O(h^{1/2 + \beta/2})$ and removal of $S$ disconnects the graph into components of size $O(h^{1-\beta})$.
\end{lemma}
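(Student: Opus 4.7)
The plan is to apply the classical Lipton--Tarjan planar separator theorem recursively, making sure each step fits into the stated space budget. Lipton--Tarjan produces, in polynomial time, an $O(\sqrt{h})$-vertex separator whose removal splits an $h$-vertex planar graph into pieces of size at most $2h/3$. I would apply this first to $P$ and then to every resulting piece that still has more than $h^{1-\beta}$ vertices, recursing until no oversized component remains. Because each level of recursion shrinks pieces by a constant factor, the process terminates after $O(\beta \log h)$ levels.

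For the separator-size bound, note that at each recursive level the pieces being processed are pairwise vertex-disjoint, and the sum $\sum_i \sqrt{h_i}$ over disjoint pieces with $\sum_i h_i \leq h$ is maximized when all $h_i$ are equal, yielding at most $O(\sqrt{h})$ new separator vertices contributed per level. Multiplying by the $O(\log h)$ levels gives a total of $O(\sqrt{h}\,\log h)$ separator vertices, which is absorbed into $\tilde{O}(h^{1/2 + \beta/2})$; and by construction every surviving component has at most $h^{1-\beta}$ vertices, so the two output conditions are met.

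The main obstacle is the space bound. A direct implementation of Lipton--Tarjan uses BFS layering plus tree-path extraction, and a naive recursion would store the entire recursion stack together with descriptions of each intermediate piece, which is too expensive. To stay within $\tilde{O}(h^{1/2 + \beta/2})$ working space, I would (i) keep only the accumulated separator $S$, which is already the dominant persistent object and fits in the budget, (ii) represent the current subproblem implicitly by a short description of which connected piece of $P \setminus S$ we are currently processing, and (iii) perform BFS layer extraction and connectivity queries on the induced subgraphs using Reingold's logspace undirected reachability algorithm, recomputing adjacency information on demand rather than storing it explicitly.

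The delicate part is verifying that this recompute-on-demand strategy still leaves the overall running time polynomial, since each connectivity query is cheap but we may invoke many of them and the recursion has depth $O(\log h)$. The check amounts to observing that the Lipton--Tarjan procedure at one node of the recursion performs $\operatorname{poly}(h)$ primitive operations, each simulated in $O(\log h)$ space by Reingold's algorithm, and that the number of recursive invocations is bounded by the number of pieces ever created, which is itself polynomial. With that bookkeeping, both the time and the space bounds follow, giving the claimed $\tilde{O}(h^{1/2 + \beta/2})$-space polynomial-time algorithm.
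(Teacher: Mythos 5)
The paper does not prove this lemma at all: it is quoted verbatim as the main technical result of Imai et al.\ \cite{Imai}, so you are effectively trying to reprove that paper from scratch, and your attempt runs into exactly the obstacle that makes their result nontrivial. The core gap is in step (iii) of your space argument. Reingold's algorithm decides undirected $s$--$t$ \emph{connectivity} in logarithmic space; it does not compute distances, and hence cannot produce the BFS \emph{layering} on which Lipton--Tarjan is built. Computing BFS levels (equivalently, unweighted shortest-path distances) in a planar graph in polynomial time and $\tilde{O}(h^{1/2+\beta/2})$ space is itself a hard problem --- the known simultaneous time--space solutions (e.g.\ Asano et al.\ \cite{Asano}) presuppose a separator decomposition, so your plan is circular at its heart. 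The same problem infects the second phase of Lipton--Tarjan: the fundamental-cycle step needs a spanning tree of radius $O(\sqrt{h})$, obtained from the BFS tree after level selection and contraction; without the layering you have neither that tree nor a sublinear-space representation of it. This is precisely why \cite{Imai} does not implement Lipton--Tarjan directly but designs an alternative separator construction compatible with the space bound; any correct proof must supply such a construction, and your proposal leaves that hole open.

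A secondary, purely quantitative error: your claim that each recursion level contributes only $O(\sqrt{h})$ new separator vertices is false. With $k$ disjoint pieces summing to $h$, the maximum of $\sum_i \sqrt{h_i}$ is $\sqrt{kh}$, which grows with $k$; at the deepest level there can be about $h^{\beta}$ pieces of size about $h^{1-\beta}$, contributing $\Theta(h^{1/2+\beta/2})$ vertices (a $\sqrt{h}\times\sqrt{h}$ grid already forces $\Omega(h^{1/2+\beta/2})$ boundary vertices in total, so your claimed total of $O(\sqrt{h}\log h)$ is below the true lower bound). The lemma's stated bound is still reachable: since every piece you recurse on has more than $h^{1-\beta}$ vertices, each level contributes at most $\sum_i h_i/\sqrt{h^{1-\beta}} \le h^{1/2+\beta/2}$, and the $O(\log h)$ levels give $\tilde{O}(h^{1/2+\beta/2})$ (or $O(h^{1/2+\beta/2})$ with Frederickson-style $r$-division accounting). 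But the argument as written is wrong and must be replaced. The recursion-control ideas in (i) and (ii) --- storing only the accumulated $S$ and naming a component of $P\setminus S$ implicitly via Reingold connectivity queries --- are sound and in the spirit of this line of work; it is the single-shot separator computation, not the recursion, that remains unproved.
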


\expref{Algorithm}{alg:psepH} describes how to construct a  %
sub-digraph %
 of $H$ which we call $\psep(H)$. In \expref{Lemma}{lem:psep} we show that $\psep(H)$ is a $\psc$ of $H$.

\begin{algorithm}
\LinesNumbered
\caption{Construction of \psep(H)}
\label{alg:psepH}
\KwIn{%
An%
induced  %
sub-digraph %
 $H$ of $\aux{G}$ and a positive number $\beta$}
\KwOut{The %
digraph %
 \psep(H)}

Construct $\trplanar{H}$ from $H$\;
Find a set $S$ of vertices in $\trplanar{H}$ which divides it into components of size $O(n^{1-\beta})$ by applying \expref{Lemma}{lem:seplemma} \label{line:spacedom}\;
Output all the vertices of $S$\; 
Output those edges of %
$H$ which are also (in its undirected form) in $\trplanar{H}[S]$\;

\ForEach{edge $e = (v, w)$ of $\trplanar{H}$}{
\If{$e$ is \emph{inside} a block of $\trplanar{H}$ \textbf{and} $e \notin E(\mxplanar{H})$ \textbf{and} both endpoints of $e$ are in $S$}{
Output all the shield edges of $e$ along with their end vertices.
}
}
\end{algorithm}

\begin{lemma}\label{lem:psepcomplexity}
Let $G$ be a grid %
digraph %
, and $H$ be %
an%
 induced  %
sub-digraph %
 of $\aux{G}$. For every $\beta > 0$, there exists a polynomial-time %
and $\tilde{O}(h^{1/2 + \beta/2})$-space  %
algorithm that takes $H$ as input and outputs \psep(H).
\end{lemma}
\begin{proof}
  To prove this lemma, we analyse the space and time complexity of 
\expref{Algorithm}{alg:psepH}. We will first see that $\trplanar{H}$ %
is %
 constructed from $H$ in logspace. To see it, we first note that we can decide if an edge $e$ of $H$ belong to $\mxplanar{H}$ in logspace by checking if $e$ satisfies the condition required by \expref{Definition}{def:planar}. Next, the boundary of $\mxplanar{H}$ %
is %
 completed in logspace by using \expref{Algorithm}{alg:completeboundary}. Finally we triangulate the resultant graph in logspace: For every face $f$ in the resultant graph, we connect each vertex of $f$ to the lowest-indexed vertex of $f$. The space complexity of construction of \psep(H) is thus dominated by the space required by \expref{step}{line:spacedom} of \expref{Algorithm}{alg:psepH}. By \expref{Lemma}{lem:seplemma}, we know that this step requires $O(h^{1/2 + \beta/2})$ space. Hence, the space required by \expref{Algorithm}{alg:psepH} is $O(h^{1/2 + \beta/2})$. Each step of \expref{Algorithm}{alg:psepH} %
is %
 performed in polynomial time, hence the total time complexity is bounded by a polynomial.
\end{proof}

\begin{lemma}
\label{lem:psep}
Let $G$ be a grid %
digraph %
, and $H$ be %
an%
induced  %
sub-digraph %
 of $\aux{G}$. The %
digraph %
 $\psep(H)$ is a $h^{1-\beta}$-$\psc$ of $H$.
\end{lemma}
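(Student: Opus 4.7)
The plan is to show that every connected component of $\sep{H}{\psep(H)}$ is contained in some connected component of $\trplanar{H} \setminus S$, where $S$ is the separator chosen in step~2 of the $\psep(H)$ construction. Because $V(\sep{H}{\psep(H)}) = V(H) \setminus V(\psep(H))$ and $S \subseteq V(\psep(H))$, every vertex of $\sep{H}{\psep(H)}$ already lives in some component of $\trplanar{H} \setminus S$. Combined with Lemma~\ref{lem:seplemma}, which bounds such components by $O(h^{1-\beta})$, it therefore suffices to prove that for every $(u,v) \in E(\sep{H}{\psep(H)})$, the endpoints $u$ and $v$ lie in the same component of $\trplanar{H} \setminus S$.

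I argue the contrapositive. Suppose $(u,v) \in E(H)$ with $u$ and $v$ in distinct components of $\trplanar{H} \setminus S$. By Lemma~\ref{lem:sepfamily} applied to the triangulated planar graph $\trplanar{H}$, there is a simple cycle $\mathcal{C}$ in $\trplanar{H}[S]$ whose interior and exterior separate $u$ from $v$. Because every edge of $\aux{G}$ is drawn inside a single block $l$, the curve for $(u,v)$ must cross the Jordan curve of $\mathcal{C}$ and hence combinatorially cross (Definition~\ref{def:cross}) some edge $f$ of $\mathcal{C}$ lying in the same block $l$. If $f \in E(H)$ then step~3 of the construction puts $f \in E(\psep(H))$, so $(u,v)$ crosses an edge of $\psep(H)$ and is removed from $E(\sep{H}{\psep(H)})$.

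The remaining case is that $f = (v_f, w_f)$ is a pure triangulation edge, for which step~4 contributes four shadow vertices $\ac{l}{p_1}{v_f}, \ac{l}{p_2}{v_f}, \ac{l}{q_1}{w_f}, \ac{l}{q_2}{w_f}$ to $V(\psep(H))$ and four shadow edges $e_1, e_2, e_3, e_4$ to $E(\psep(H))$. Write $u = \ac{l}{r}{v_f}$, $v = \ac{l}{s}{v_f}$ and $w_f = \ac{l}{p}{v_f}$; by Lemma~\ref{lem:cross} and a possible relabeling we may assume $0 < r < p < s$. Let $\mu = p + q_1$ and $\nu = p + q_2 - L$, where $L$ is the boundary length of block $l$. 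Another application of Lemma~\ref{lem:cross} shows: $r < p_1$ gives $(u,v)$ crossing $e_1$; $s > p_2$ gives a crossing with $e_2$; $s < \mu$ gives a crossing with $e_3$; $r > \nu$ gives a crossing with $e_4$; and the boundary equalities place an endpoint of $(u,v)$ at a shadow vertex, hence in $V(\psep(H))$. Any of these outcomes establishes $(u,v) \notin E(\sep{H}{\psep(H)})$.

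The main obstacle is the residual sub-case $p_1 < r < \nu < p < \mu < s < p_2$, where none of the shadow crossings applies directly and no endpoint is a shadow vertex. To rule it out I plan to leverage the closure property of Lemma~\ref{lem:psg}: any combinatorial crossing of $(u,v)$ with an edge of $\aux{G}$ incident to $v_f$ would force $(v_f, v) \in E(\aux{G})$, hence $(v_f, v) \in E(H)$ by vertex-inducedness, violating the choice of $p_1$ or $p_2$ unless $s \in \{p_1, p_2\}$; a symmetric argument applied at $w_f$ handles $u$. Combining these forcing statements with the geometry of the Jordan cycle $\mathcal{C}$ passing through $v_f$ and $w_f$, potentially by choosing a different edge of $\mathcal{C}$ in block $l$ that is intersected by $(u,v)$, should reduce the residual configuration to one of the earlier sub-cases or to the already-handled $f \in E(H)$ case, yielding the desired contradiction.
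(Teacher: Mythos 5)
Your overall frame---reduce the lemma to showing that every edge of $H$ whose endpoints lie in distinct components of $\trplanar{H}\setminus S$ is destroyed in $\sep{H}{\psep(H)}$, then invoke Lemma~\ref{lem:seplemma} for the size bound---is correct and is the same frame the paper uses. The gap is exactly where you flag it: the residual configuration in which $(u,v)$ crosses a pure triangulation edge $f$ of the separating cycle but none of $f$'s four shadows, and neither endpoint is a shadow vertex. That configuration is not a corner case to be mopped up at the end; it is the entire content of the lemma. Your treatment of it consists of a conditional forcing statement (``any combinatorial crossing of $(u,v)$ with an edge of $\aux{G}$ incident to $v_f$ would force $(v_f,v)\in E(\aux{G})$'') whose hypothesis you never establish, followed by ``should reduce the residual configuration to one of the earlier sub-cases.'' Nothing in your argument guarantees that $(u,v)$ crosses any $\aux{G}$-edge incident to $v_f$ or $w_f$, so the forcing never fires; and switching to a different edge of the cycle $\mathcal{C}$ can simply reproduce the same residual configuration there.

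The paper closes this case with three ingredients your outline lacks. First, it does not reason about an arbitrary offending edge: among all edges leaving the component $U$ without crossing $\psep(H)$, it fixes $e=(v,\ac{l}{p}{v})$ with $p$ minimal, which creates an extremal quantity to contradict. Second, since $e$ crosses the triangulation edge $f$ and $\trplanar{H}$ is planar, $e\notin\mxplanar{H}$, so the second part of Lemma~\ref{lem:planarG} supplies an edge $g\in\mxplanar{H}$ crossing $e$ (chosen with maximal span), and $g$ cannot cross $f$ because both lie in the planar graph $\trplanar{H}$; this pins down the possible positions of $g$ relative to $f$. Third, the closure properties of Lemmas~\ref{lem:psg} and~\ref{lem:closer} applied to the crossing pair $e,g$ manufacture new edges of $H$ that still cross $f$ but contradict either the minimality of $p$ or the extremal choice of $g$ (or of the auxiliary edge $g'$); the shadow edges are only needed for the leftover orderings. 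Your proposal invokes Lemma~\ref{lem:psg} but has no guaranteed crossing partner for it to act on and no extremal choice to contradict, so the residual case genuinely remains open and the proof as proposed is incomplete.
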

To prove \expref{Lemma}{lem:psep}, we first show a property of triangulated graphs that we use in our construction of $\psc$. We know that a simple cycle in a planar embedding of a planar graph divides the plane into two parts. We call these two parts the two \emph{sides} of the cycle.

\begin{lemma}
\label{lem:sepfamily}
Let $G$ be a triangulated planar graph, and $S$ be a subset of its vertices. For each pair of vertices $u, v$ belonging to different components of $G \setminus S$, a cycle in $G[S]$ exists, such that $u$ and $v$ belong to different sides of this cycle.
\end{lemma}

\begin{proof}
To prove the Lemma, we first need some terminology. We call a set of faces a \emph{region of edge-connected faces} if they induce a connected subgraph in the dual graph. We can orient the edges of an undirected planar simple cycle to make it a directed cycle. This can help us identify the two \emph{sides} of the cycle as \emph{interior} (left-side) and \emph{exterior} (right-side).

Let $C$ be a component of $G \setminus S$ and $S'$ be the set of vertices of $S$ adjacent to at least one of the vertices of $C$ in $G$. Let $F$ be the set of triangle faces of $G$ to which at least one vertex of $C$ belongs. Let $\tilde{G}$ be the dual graph of $G$ and $\tilde{G}[F]$ be the subgraph induced by $F$ on this dual graph.  %
We first observe that since we have started with a triangulated graph, the vertices of any face $f$ of $F$ will either belong to $C$ or $S'$. 

Let $f_1$ and $f_2$ be two arbitrary faces of $F$. We will first show that $F$ is a region of edge-connected faces by showing that there exists a path between $f_1$ and $f_2$ in $\tilde{G}[F]$. Let $v_1$ be a vertex of $C$ which belongs to $f_1$. Similarly, let $v_2$ be a vertex of $C$ which belongs to $f_2$. Let the length of shortest path between $v_1$ and $v_2$ in $G \setminus S$ be $k$. We prove our claim by induction on $k$. If $k = 0$, then $v_1 = v_2$. We know that the set of faces that share a vertex induce a connected subgraph in the dual graph, hence our claim holds for $k=0$.
Now, we assume that our claim holds for path length $k = \ell-1$. To prove that our claim holds for $k= \ell$, let $(v_3, v_2)$ be the last edge in the shortest path from $v_1$ to $v_2$. Let $f_3$ be a face with both $v_3$ and $v_2$ in its boundary. Since the length of shortest path from $v_1$ to $v_3$ is $\ell-1$, by induction hypothesis, there exists a path between $f_1$ and $f_3$ in $\tilde{G}[F]$. Since $f_3$ and $f_2$ share the vertex $v_2$, there is a path between $f_3$ and $f_2$ in $\tilde{G}[F]$. Combining these two paths, we get a path between $f_1$ and $f_2$ in $\tilde{G}[F]$ which proves our claim.

Miller proved that the boundary of a region of edge-connected faces is a set of edge-disjoint \emph{simple} cycles which can be oriented such that they have disjoint exteriors \cite{Miller}. We claim that all the vertices of any boundary cycle of $F$ are contained in $S'$. We prove this claim by contradiction. Let us assume that $v$ is a vertex in a boundary cycle of $F$ that is not %
 in $S'$. In this case, $v$ belongs to $C$. Consider an edge $(v,w)$ of the boundary cycle. Let $f_a$ and $f_b$ be the two faces that shares the edge $(v,w)$. Let $v_a$ and $v_b$ be the third vertex of $f_a$ and $f_b$ respectively. Since $v_a$, $v_b$ and $w$ are all connected to $v$, they are %
 in either $S'$ or $C$. Thus both $f_a$ and $f_b$ are in $F$. However, this contradicts that $(v,w)$ is at boundary of $F$. 

Thus, we have proved that the vertices of a connected component of $G\setminus S$ are contained inside a set of edge-disjoint simple cycles with disjoint exteriors. The vertices of all such cycles are contained in $S$. Hence the lemma follows.
\end{proof}

\begin{proof}[Proof of \expref{Lemma}{lem:psep}]
Let $C = \psep(H)$. Let $S$ be the set of vertices obtained from $\trplanar{H}$ by using \expref{Lemma}{lem:seplemma}. We claim that if any two vertices $u$ and $v$ belong to different connected components of $\trplanar{H} \setminus S$, then it belongs to different components of $\concom{\sep{H}{C}}$. We prove this by contradiction. Let us assume that it is not true. Then there is an edge $e$ in $H$ and two distinct sets $U_1$ and $U_2$ of $\concom{\trplanar{H}\setminus S}$ such that one of the end point of $e$ is in $U_1$, the other is in $U_2$ and $e$ does not cross any of the edges of $\psep(H)$. Without loss of generality, let $e = (v, c_{\ell}^p(v))$, for some block $\ell$, where $v \in U_1$ and $c_{\ell}^p(v)$ is not in $U_1$ (we pick the edge $e$ such that $p$ is minimum for any choice of $v$). Due to \expref{Lemma}{lem:sepfamily}, it follows that there exists an edge $f$ in $\trplanar{H}[S]$ such that $f = ((c_{\ell}^q(v)), c_{\ell}^r(v))$ and that $e$ crosses $f$. This edge $f$ is a triangulation edge, for otherwise it is also %
 in \psep(H) giving us a contradiction. We orient the triangulation edge so that $q < p < r$. Now, since $e$ is not %
 in $\mxplanar{H}$, by \expref{Lemma}{lem:planarG} there exists at least one edge that crosses it and is %
 in $\mxplanar{H}$. Let $g = (c_{\ell}^s(v), c_{\ell}^t(v))$ be one such edge such that $t - s$ is maximum\footnote{We note that we do \emph{not} consider the \emph{absolute} value of the $t-s$, rather we consider the actual value. $t-s$ can hence be negative for an edge.} We thus have the following cases:
\begin{description}
\item[Case 1 ($s<q<p<r<t$):] In this case, since $g$ crosses $e$, by
\expref{Lemma}{lem:psg}, we have that the edge $e' = (c_{\ell}^s(v), c_{\ell}^p(v))$ is also %
 in $H$. $e'$ also crosses $f$. Since $p-s < p$, existence of $e'$ contradicts our choice of $e$.
\item[Case 2 ($q<t<p<s<r$):] In this case, since $g$ crosses $e$, by 
\expref{Lemma}{lem:psg}, we have that the edge $e' = (v, c_{\ell}^t(v))$ is also %
 in $H$. $e'$ also crosses $f$. Since $t < p$, existence of $e'$ contradicts our choice of $e$.
\item[Case 3 ($t<q<p<r<s$):] In this case, the edge $e' = (c_{\ell}^{s}(v), c_{\ell}^p(v))$ will also be %
 in $H$. $e'$ will cross $f$ and hence will not be %
 in $\mxplanar{H}$. By \expref{Lemma}{lem:planarG}, there exists an edge $g' = (c_{\ell}^{s'}(v), c_{\ell}^{t'}(v))$ in $\mxplanar{H}$ that crosses $e'$. Since both $g'$ and $g$ are %
 in $\mxplanar{H}$, these edges %
do not %
cross each other. 

Using the fact that $g'$ crosses $e'$ and $g'$ does not cross $g$, we get the following:

\[t \leq \min(s',t') < p < \max(s', t') < s\]

Consequently, $t'-s' > t-s$ and $g'$ crosses $e$. This contradicts our choice of $g$.

\item[Case 4 ($q<s<p<t<r$):] In this case, since $g$ crosses $e$, by
\expref{Lemma}{lem:psg}, we have that the edge $e' = (v, c_{\ell}^t(v))$ is %
 in $H$. $e'$ also crosses $f$ and hence $e'$ was not %
 in $\mxplanar{H}$. Thus there will exist an edge in $\mxplanar{H}$ which crosses $e'$ by \expref{Lemma}{lem:planarG}. Let $g' = (c_{\ell}^{s'}(v), c_{\ell}^{t'}(v))$ be the edge in $\mxplanar{H}$ that crosses $e'$ such that $t'-s'$ is maximum. We have the following subcases:
\begin{description}
\item[Case 4a ($s'<q<t<r<t'$):]In this case, since $g'$ crosses $e$, by \expref{Lemma}{lem:psg}, we have that the edge $e'' = (c_{\ell}^{s'}(v), c_{\ell}^p(v))$ is also %
 in $H$. $e''$ also crosses $f$. Since $p-s' < p$, existence of $e''$ contradicts our choice of $e$.
\item[Case 4b ($q<t'<t<s'<r$):] In this case, we see that since $g$ and $g'$ are both %
 in $\mxplanar{H}$, they
do not %
cross each other. Therefore. $t' \leq s$ and consequently $t' < p$. Thus, $g'$ crosses $e$ and by \expref{Lemma}{lem:psg}, the edge $e'' = (v, c_{\ell}^{t'}(v))$ is also %
 in $H$. $e''$ also crosses $f$. Since $t' < p$, existence of $e''$ contradicts our choice of $e$.
\item[Case 4c ($t'<q<t<r<s'$):] In this case, the edge $e'' = (c_{\ell}^{s'}(v), c_{\ell}^t(v))$ will also be %
 in $H$. $e''$ will cross $f$ and hence will not be %
 in $\mxplanar{H}$. By \expref{Lemma}{lem:planarG}, there exists an edge $g'' = (c_{\ell}^{s''}(v), c_{\ell}^{t''}(v))$ in $\mxplanar{H}$ that crosses $e''$. Since $g''$, $g'$ and $g$ are all %
 in $\mxplanar{H}$, these edges 
do not %
cross each other. 

Using the fact that $g''$ crosses $e''$, $g''$ does not cross $g'$ and $g''$ does not cross $g$; we get the following:

\[t' \leq \min(s'',t'') <s< t < \max(s'', t'') < s'\]

Consequently, $t''-s'' > t'-s'$ and $g''$ crosses $e'$. This contradicts our choice of $g'$.

\item[Case 4d ($q<s'<t<t'<r$):] In this case, we see that since $g$ and $g'$ are both %
 in $\mxplanar{H}$, they
do not %
cross each other. Therefore $s' \leq s$ and consequently $s' < p$. Thus, $g'$ crosses $e$ and $t' - s' > t-s$. This contradicts our choice of $g$.

\end{description}
\end{description}
In other cases, if $g$ is picked such that one of its vertices is common with $f$, then $e$ will cross a $shield$ edge of $f$, giving a contradiction. If $g$ is picked such that $g$ cross $f$, then it contradicts the fact that both of them are %
 in $\trplanar{H}$, which is a planar %
digraph %
.
\end{proof}
Summarizing \expref{Lemmas}{lem:psepcomplexity} \expref{and}{lem:psep} we have \expref{Theorem}{thm:constructpsc}.
\begin{theorem} \label{thm:constructpsc}
Let $G$ be a grid %
digraph %
 and $H$ be %
an%
induced subgraph of $\aux{G}$ with $h$ vertices. For any constant $\beta > 0$, there exists an $\tilde{O}(h^{1/2 + \beta/2})$-space and polynomial-time %
algorithm that takes $H$ as input and outputs an $h^{1-\beta}$-$\psc$ of size $O(h^{1/2 + \beta/2})$.
\end{theorem}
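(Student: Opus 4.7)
The plan is to assemble the pseudoseparator construction from the pieces already prepared in Lemmas \ref{lem:planarG}, \ref{lem:seplemma} and \ref{lem:psep}, and then verify the claimed resource bounds. Given $H$, the first step is to compute $\mxplanar{H}$ by iterating over the edges of $H$ and applying the characterisation in Definition \ref{def:planar}; by Lemma \ref{lem:planarG} this is a planar subgraph on the same vertex set as $H$. I would then triangulate $\mxplanar{H}$ in the two stages described immediately before Lemma \ref{lem:seplemma}: first add the boundary edges around each block so that every edge of that block lies inside the resulting cycle, and then triangulate the remaining faces. This produces the triangulated planar graph $\trplanar{H}$ on $h$ vertices, and the entire step runs in logspace.

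Next, I would invoke Lemma \ref{lem:seplemma} on $\trplanar{H}$ with the given constant $\beta$ to obtain a vertex set $S$ of size $O(h^{1/2 + \beta/2})$ whose removal leaves components of size $O(h^{1-\beta})$. This step uses polynomial time and $\tilde{O}(h^{1/2 + \beta/2})$ space, and will dominate the overall space complexity. I would then build $\psep(H)$ following the four-step recipe stated just before Lemma \ref{lem:psep}: place $S$ into $V(\psep(H))$, place every edge of $\trplanar{H}[S]$ that already lies in $H$ into $E(\psep(H))$, and for every triangulation edge of $\trplanar{H}[S]$ that is not in $H$, locate its (at most) four shadows by scanning the appropriate block of $H$ and add the shadows together with their endpoints. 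Correctness of the output, i.e.\ the fact that $\psep(H)$ is an $h^{1-\beta}$-$\psc$ of $H$, is exactly the content of Lemma \ref{lem:psep}.

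To finish I would verify the size bound: since $\trplanar{H}$ is planar on $h$ vertices, its induced subgraph $\trplanar{H}[S]$ has $O(|S|) = O(h^{1/2+\beta/2})$ edges, and each such edge contributes at most four vertices and four edges to $\psep(H)$; adding the $|S|$ vertices of $S$ itself still gives a total of $O(h^{1/2+\beta/2})$ vertices and edges, matching the claim.

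I expect the proof itself to be essentially bookkeeping, since the conceptual work is already done in Lemma \ref{lem:psep}. The only point that needs some care is the space analysis: I would make sure that $\mxplanar{H}$, the triangulation $\trplanar{H}$, and the shadows of each separator edge are all produced on-the-fly from $H$ rather than stored explicitly, so that the only object held in memory is the $\tilde{O}(h^{1/2 + \beta/2})$-size workspace of Lemma \ref{lem:seplemma} together with the eventual output $\psep(H)$, whose size is within the same bound.
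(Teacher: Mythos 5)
Your proposal is correct and follows essentially the same route as the paper, which proves the theorem simply by combining the construction of $\trplanar{H}$, the separator of Lemma \ref{lem:seplemma}, and the correctness argument of Lemma \ref{lem:psep}. Your added bookkeeping (the $O(|S|)$ edge count of the planar graph $\trplanar{H}[S]$ and the on-the-fly space accounting) fills in details the paper leaves implicit but does not change the argument.
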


\section{Algorithm to solve reachability in the auxiliary %
digraph %
}   %
\label{sec:auxreach}
	\begin{figure}
		\centering
		\begin{subfigure}{0.4\textwidth}
			\begin{tikzpicture}[scale = 0.5]
			
			\filldraw (-1,2) circle (2pt) node[below] {$u$};    
			\filldraw (1,1) circle (2pt);    
			\filldraw (1,2) circle (2pt);    
			\filldraw (1,3) circle (2pt) node[above] {$w$};    
			\filldraw (1,4) circle (2pt);
			\filldraw (3,1) circle (2pt) node[below] {$v$};	
			
			\draw (-2,2) circle [x radius=1.5cm, y radius=10mm];		
			\draw[->] (-1,2) -- (1,3);
			
			\draw (4,1) circle [x radius=1.5cm, y radius=10mm];	
			\draw[->] (1,3) -- (3,1);
			
			\draw (1,2.5) circle [x radius=0.5cm, y radius=20mm];

			\end{tikzpicture}
			
			\caption{\scriptsize The $s$-$t$ path takes a vertex of the separator}
			\label{fig:viaseparator}
			
		\end{subfigure}
	\hfill
		\begin{subfigure}{0.4\textwidth}
			\centering
			\begin{tikzpicture}[scale = 0.5]
			
			\filldraw (-1,2) circle (2pt) node[left] {$u'$};    
			\filldraw (1,1) circle (2pt) ;
			\filldraw (1,4) circle (2pt) ;
			\filldraw (3,2) circle (2pt) node[right] {$v'$};			
			
			\filldraw (-1,3) circle (2pt) node[left] {$u$};			
			\filldraw (3,3) circle (2pt) node[right] {$v$};			
			
			\draw[->] (-1,2) -- (3,2);
			
			\draw (-2,2.5) circle [x radius=1.5cm, y radius=10mm];
			\draw (1,2.5) circle [x radius=0.5cm, y radius=20mm];	
			\draw (4,2.5) circle [x radius=1.5cm, y radius=10mm];

			\draw[->] (-1,3) -- (3,3);
			\draw[->] (1,1) --node[right] {$e$} (1,4) ;
			\draw[dashed, ->] (-1, 2) -- (3,3);

			\end{tikzpicture}
			
			\caption{\scriptsize The $s$-$t$ path crosses an edge of the separator}
			\label{fig:viaedgecrossing}
		\end{subfigure}
		
		\caption{ }
		
	\end{figure}

In this section, we discuss the grid %
digraph %
 reachability algorithm. Let $G$ be a grid %
digraph %
 having $\widetilde{n}$ vertices. By induction, we assume that we have access to %
an %
induced subgraph $H$ of $\aux{G}$, containing $h$ vertices. Below we describe a recursive procedure $\auxreach{H}{x}{y}$ that outputs \textsf{true} if there is a path from $x$ to $y$ in $H$ and outputs \textsf{false} otherwise.

\subsection{Description of the algorithm \textsf{AuxReach}}

First we construct a $h^{1-\beta}$-$\psc$ $C$ of $H$, using \expref{Theorem}{thm:constructpsc}. We also ensure that $x$ and $y$ are part of $C$ (if not then we add them). Let $I_1, I_2, \ldots, I_{\ell}$ be the connected components of $\sep{H}{C}$.

We maintain an array called \textsf{visited} of size $|C|$ to mark vertices or edges of the $\psc$ $C$. Each cell of \textsf{visited} corresponds to a distinct vertex or edge of $C$. For a vertex $v$ in $C$, we set $\textsf{visited}[v] := 1$ if there is a path from $x$ to $v$ in $H$, else it is set to 0. For an edge $e = (u,v)$ in $C$, we set $\textsf{visited}[e] := u'$ if (i) there is an edge $f = (u',v')$ that crosses $e$, (ii) there is a path from $x$ to $u'$ in $H$ and (iii) $f$ is the closest such edge to $u$. Else $\textsf{visited}[e]$ is set to \textsf{NULL}. Initially, for all vertex $v \in C$, $\textsf{visited}[v] := 0$ and for all edges $e \in C$, $\textsf{visited}[e] := \textsf{NULL}$. We say that a vertex $v$ is \emph{marked} if either $\textsf{visited}[v] = 1$ or $\textsf{visited}[e] = v$ for some edge $e$.

First set $\textsf{visited}[x]:=1$. We then perform an outer loop with $h$ iterations and in each iteration update certain entries of the array $\textsf{visited}$ as follows. For every vertex $v \in C$, the algorithm sets $\textsf{visited}[v]:=1$ if there is a path from a marked vertex to $v$ such that the internal vertices of that path all belong to only one component $I_i$. Similarly, for each edge $e = (u,v)$ of $C$, the algorithm sets $\textsf{visited}[e]:=u'$ if (i) there exists an edge $f = (u',v')$ which crosses $e$, (ii) there is a path from a marked vertex to $u'$ such that the internal vertices of that path all belong to only one component $I_i$ and, (iii) $f$ is the closest such edge to $u$. Finally we output \textsf{true} if $\textsf{visited}[y] = 1$ else output \textsf{false}. We use the procedure \textsf{AuxReach} recursively to check if there is a path between two vertices in a single connected component of $\sep{H}{C}$. A formal description of \textsf{AuxReach} is given in \expref{Algorithm}{alg:psgreach}.
\subsection{Proof of correctness of \textsf{AuxReach}}

Let $P$ be a path from $x$ to $y$ in $H$. Suppose $P$ passes through the components $I_{\sigma_1}, I_{\sigma_2}, \ldots, I_{\sigma_L}$ in this order. The length of this sequence %
is %
 at most $\lvert H \rvert$. As the path leaves the component $I_{\sigma_j}$ and goes into $I_{\sigma_{j+1}}$, it can do in the following two ways only:
\begin{enumerate}
\item The path exits $I_{\sigma_j}$ through a vertex $w$ of $\psc$ as shown in \expref{Figure}{fig:viaseparator}. In this case, \expref{Algorithm}{alg:psgreach} %
marks  %
the vertex $w$.
\item The path exits $I_{\sigma_j}$ through an edge $(u, v)$ whose other endpoint is in $I_{\sigma_{j+1}}$. By \expref{Lemma}{lem:closer}, this edge will cross an edge $e = (x',y')$ of the $\psc$. In this case, \expref{Algorithm}{alg:psgreach} %
marks  %
the vertex $u'$, such that there is an edge $(u', v')$ that crosses $e$ as well and $(u',v')$ is closer than $(u,v)$ to $x'$ and there is a path in $I_{\sigma_j}$ from a marked vertex to $u'$. By \expref{Lemma}{lem:closer}, the edge $(u', v)$ %
is  %
 in $H$ as well.
\end{enumerate}
\begin{algorithm}
\LinesNumbered
\caption{$\auxreach{H}{s}{t}$}
\label{alg:psgreach}
\KwIn{%
An %
induced subgraph $H$ of $\aux{G}$ and two vertices $x$ and $y$ in $H$ (let $G$ be an $m \times m$ grid %
digraph %
 and $h = |V(H)|$)}
\KwOut{\textsf{true} if there is a path from $x$ to $y$ in $H$ and \textsf{false} otherwise}

\lIf(\tcc*[f]{$m$ is a global variable where $G$ is an $m \times m$ grid %
digraph %
}){$h \leq m^{1/8} $}{
Use DFS to solve the problem}  \label{line:line1}
\Else{
Compute a $h^{1-\beta}$-$\psc$ $C$ of $H$ using \expref{Theorem}{thm:constructpsc}\; \label{line:calcpsep}
$C \gets C \cup \{x,y\}$\;
\lForEach{edge $e$ in $C$}{$\textsf{visited}[e] \gets \textsf{NULL}$}
\lForEach{vertex $v$ in $C$}{$\textsf{visited}[v] \gets 0$}
$\textsf{visited}[x] \gets 1$\;

\For{$i = 1$ to $\lvert H \rvert$}{ \label{alg:forloop}
\ForEach{edge $e = (u,v) \in C$}{

$\textsf{closestedge} \gets \textsf{NULL}$\;
\ForEach{marked vertex $w$}{
	\ForEach{$U \in \concom{\sep{H}{C}}$}{
		\ForEach{edge $f = (u',v')$ such that $f$ crosses $e$}{
			\If{$\auxreach{H[U \cup \{w, u' \}]}{w}{u'} = \textsf{true}$}{
				\If{$\textsf{closestedge} =\textsf{NULL}$  \textbf{or} $f$ is closer to $e$ than $\textsf{closestedge}$}{
					$\textsf{visited}[e] \gets u'$\;\label{line:recur1}
					$\textsf{closestedge} \gets f$\;
				}
			}
		}
	}
}
}
\ForEach{vertex $v \in C$}{
\If{$\exists w \exists U$ such that $w $ is 
a marked vertex\ {\bf and}\ $U \in \concom{\sep{H}{C}}$\
{\bf and}\ $\auxreach{H[U \cup \{w, v \}]}{w}{v} = \textsf{true}$))}{  
$\textsf{visited}[v] \gets 1$\; \label{line:recur2}}
}
}
\lIf{$\textsf{visited}[y] =1$}{\Return \textsf{true}}
\lElse{\Return \textsf{false}}
}
\end{algorithm}
Thus after the $j$-th iteration, \textsf{AuxReach}
traverses  %
the fragment of the path in the component $I_{\sigma_j}$ and either
marks  %
its endpoint or a vertex which is closer to the edge $e$ of $C$ which the path crosses. Finally, $y$ %
is  %
marked after $L$ iterations if and only if there is a path from $x$ to $y$ in $H$. We give a formal proof of correctness in \expref{Lemma}{lem:algo_correctness}. For a path $P = (u_1, u_2, \ldots u_t)$, we define $\tail{P} := u_1$ and $\head{P} := u_t$.

\begin{lemma}
\label{lem:algo_correctness}
Let $G$ be a grid %
digraph %
 and $H$ be %
an%
induced subgraph of $\aux{G}$. Then for any two vertices $x,y$ in $H$, there is a path from $x$ to $y$ in $H$ if and only if $\auxreach{H}{x}{y}$ returns \textsf{true}.
\end{lemma}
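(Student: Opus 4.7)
The plan is to prove the two directions separately. For soundness (the \textsf{true} answer implies a real path), I would argue by induction on the iteration counter $i$ of the main \textbf{for}-loop (line \ref{alg:forloop}) that the following invariant holds: whenever $\textsf{visited}[v]$ becomes $1$, there is an actual path from $x$ to $v$ in $H$, and whenever $\textsf{visited}[e]$ becomes $u'$, there is an actual path from $x$ to $u'$ in $H$. The base case $\textsf{visited}[x]:=1$ is trivial. In the inductive step, each update is triggered by a recursive \textsf{AuxReach} call returning \textsf{true} on a subgraph of $H$; by the (outer) induction hypothesis on $h$ (the size of $H$), that call correctly certifies a path through one component $U \in \concom{\sep{H}{C}}$ from some previously marked vertex $w$ to the newly marked target. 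Concatenating the path from $x$ to $w$ (which exists by the inner inductive hypothesis) with this new path yields the desired $x \to v$ (or $x \to u'$) path. If the algorithm reports \textsf{true}, then $\textsf{visited}[y]=1$, so a path exists.

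For completeness, assume there is a path from $x$ to $y$ in $H$ and take the shortest such path $P$. Decompose $P$ into maximal fragments that lie inside single components of $\sep{H}{C}$, giving a sequence $I_{\sigma_1}, I_{\sigma_2},\ldots, I_{\sigma_L}$ with $L \le |H|$. I would prove by induction on $j = 1,\ldots,L$ that after the $j$-th iteration of the outer loop, some vertex $w_j$ is marked such that (a) there is a path from $x$ to $w_j$ in $H$, and (b) the remainder of $P$ starting from $w_j$ still reaches $y$ using only components $I_{\sigma_{j+1}},\ldots,I_{\sigma_L}$ and the pseudoseparator. The transition from iteration $j$ to $j+1$ is the heart of the argument: the path $P$ exits $I_{\sigma_j}$ either through a vertex $w$ of $C$ (Figure \ref{fig:viaseparator}), in which case line \ref{line:recur2} marks $w$ via a recursive call on $H[U \cup \{w_j, w\}]$ for $U = I_{\sigma_j}$; or through an edge $(u,v)$ crossing some separator edge $e = (x',y')$ of $C$ (Figure \ref{fig:viaedgecrossing}), in which case line \ref{line:recur1} marks the vertex $u'$ of the edge $(u',v')$ that is the closest crossing edge of $e$. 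In the latter case, Lemma \ref{lem:closer} guarantees that since both $(u,v)$ and $(u',v')$ cross $e$ and $(u',v')$ is closer to $x'$ than $(u,v)$, the edge $(u',v)$ is also present in $H$, so we can splice $P$ and continue from $u'$ into $I_{\sigma_{j+1}}$ without changing which component we enter next.

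After $L \le |H|$ iterations the final vertex of $P$, namely $y$, will have $\textsf{visited}[y] = 1$, so the algorithm returns \textsf{true}. Combining both directions proves the lemma. The outer induction is on $h = |V(H)|$; the base case is when $h \le m^{1/8}$, where DFS is used and correctness is immediate.

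The main obstacle I expect is the edge-crossing transition in the completeness direction: the algorithm does not necessarily mark the natural endpoint $u$ of the crossing edge used by $P$, but instead marks a possibly different vertex $u'$ coming from the closest crossing edge to $x'$. Verifying that this substitution still yields a valid continuation of the path into the correct next component $I_{\sigma_{j+1}}$ is exactly what Lemma \ref{lem:closer} is designed to provide, but care is needed to check that $(u',v)$ indeed lies in the same block and hence that its presence places $v$ (and thus $I_{\sigma_{j+1}}$) reachable from $u'$ by a single edge. Once this point is handled cleanly, the rest of the argument is a straightforward double induction (on $h$ externally and on $i$ or $j$ internally).
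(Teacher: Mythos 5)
Your proposal matches the paper's proof in both structure and substance: soundness via the invariant that marking certifies an actual path, and completeness by decomposing $P$ into fragments lying in single components of $\sep{H}{C}$, inducting over iterations of the outer loop, and invoking Lemma \ref{lem:closer} to handle the case where the path exits a component through an edge crossing a pseudoseparator edge (so that the substituted vertex $u'$ still has an edge to $\tail{P_{i+1}}$). The subtlety you flag at the end is exactly the point the paper's Case 2 of its inductive claim addresses, so no gap remains.
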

\begin{proof}
Firstly observe that, if a vertex is marked, then there is a path from some other marked vertex to that vertex in $H$. Hence if there is no path from $x$ to $y$ then $y$ is never marked by \textsf{AuxReach} and hence \textsf{AuxReach} returns \textsf{false}.

Now let $P$ be a path from $x$ to $y$ in $H$. We divide the path into subpaths $P_1, P_2 , \ldots , P_{\ell}$, such that for each $i$, all vertices of $P_i$ belong to $U \cup V(C)$ for some connected component $U$ in $\concom{\sep{H}{C}}$ and either (i) $\head{P_i} = \tail{P_{i+1}}$, or (ii) $e_i = (\head{P_i} , \tail{P_{i+1}})$ is an edge that crosses some edge $f_i \in C$. By \expref{Definition}{def:psc}, we have that if condition (i) is true then $\head{P_i}$ is a vertex in $C$, and if condition (ii) is true then $\head{P_i}$ and $\tail{P_{i+1}}$ belong to two different components of $\concom{\sep{H}{C}}$ and $e_i$ is the edge between them.

We claim that after $i$-th iteration of loop in \expref{Line}{alg:forloop} of \expref{Algorithm}{alg:psgreach}, either of the following two statements hold:
\begin{enumerate}
\item $\head{P_i}$ is a vertex in $C$ and $\textsf{visited}[\head{P_i}] = 1$.
\item There exists an edge $f_i = (u_i, v_i)$ of $C$ such that the edge $e_i = (\head{P_i}, \tail{P_{i+1}})$ crosses $f_i$ and there is an edge $g_i = (u_i',v_i')$ which crosses $f_i$ as well, such that $g_i$ is closer to $u_i$ than $e_i$ and $\textsf{visited}[f_i] = u_i'$.
\end{enumerate}
We prove the claim by induction. The base case holds since $x$ is marked at the beginning. We assume that the claim is true after the $(i-1)$-th iteration. We have that $P_i$ belongs to $U \cup V(C)$ for some connected component $U$ in $\concom{\sep{H}{C}}$.
\begin{description}
\item[Case 1 ($\head{P_{i-1}} = \tail{P_{i}} = w$(say)):] By induction hypothesis $w$ was marked after the $(i-1)$-th iteration. If $\head{P_i}$ is a vertex in $C$ then it will be marked after the $i$-th iteration in \expref{Line}{line:recur2}. On the other hand if $e_i = (\head{P_i} , \tail{P_{i+1}})$ is an edge that crosses some edge $f_i = (u_i,v_i) \in C$ then in the $i$-th iteration in \expref{Line}{line:recur1}, the algorithm marks a vertex $u_i'$ such that, $g_i = (u_i',v_i')$ is the closest edge to $u_i$ that crosses $f_i$ and there is a path from $w$ to $u_i'$.

\item[Case 2 ($e_{i-1} = (\head{P_{i-1}} , \tail{P_{i}})$ crosses some edge $f_{i-1} = (u_{i-1},v_{i-1}) \in C$):] By induction hypothesis, there is an edge $g_{i-1} = (u_{i-1}',v_{i-1}')$ which crosses $f_{i-1}$ as well, such that $g_{i-1}$ is closer to $u_{i-1}$ than $e_{i-1}$ and $\textsf{visited}[f_{i-1}] = u_{i-1}'$. By \expref{Lemma}{lem:closer} there is an edge in $H$ between $u_{i-1}'$ and $\tail{P_{i}}$ as well.
Now if $\head{P_i}$ is a vertex in $C$ then it will be marked after the $i$-th iteration in \expref{Line}{line:recur2} by querying the %
digraph %
 $H[U \cup \{u_{i-1}', \head{P_i}\}]$. On the other hand if $e_i = (\head{P_i} , \tail{P_{i+1}})$ is an edge that crosses some edge $f_i = (u_i,v_i) \in C$ then in the $i$-th iteration in \expref{Line}{line:recur1}, \textsf{AuxReach} queries the %
digraph %
 $H[U \cup \{u_{i-1}', u_i'\}]$ and marks a vertex $u_i'$ such that, $g_i = (u_i',v_i')$ is the closest edge to $u_i$ that crosses $f_i$ and there is a path from $u_{i-1}'$ to $u_i'$.\qedhere\end{description}\end{proof}

Our subroutine %
solves  %
reachability in a subgraph $H$ (having size $h$) of $\aux{G}$. We do not explicitly store a component of $\concom{\sep{H}{C}}$, since it might be too large. Instead, we identify a component with the lowest indexed vertex %
 in it and use Reingold's algorithm on $\sep{H}{C}$ to determine if a vertex is %
 in that component. We require $\widetilde{O}(h^{1/2 + \beta/2})$ space to compute a $h^{1-\beta}$-$\psc$ by \expref{Theorem}{thm:constructpsc}. We can potentially mark all the vertices of the $\psc$ and for each edge of
the  %
$\psc$ we mark at most one additional vertex. Since the size of
the  %
$\psc$ is at most $O(h^{1/2 + \beta/2})$, we require $\widetilde{O}(h^{1/2 + \beta/2})$ space. The algorithm recurses on a %
digraph %
 with $h^{1 - \beta}$ vertices.  %
Since we stop the recursion when $h \le m^{1/8}$
(\expref{Line}{line:line1} of \expref{Algorithm}{alg:psgreach}), %
the depth of the recursion is at most
$\lceil -3/\log(1-\beta)\rceil$,  %
which is a constant.

Since the %
digraph %
 $H$ is given implicitly in our algorithm, an additional polynomial overhead is involved in obtaining its vertices and edges. However, the total time complexity %
remains  %
a polynomial in the number of vertices since the recursion depth is constant.

\begin{lemma}
Let $G$ be an $m \times m$ grid %
digraph %
 and $H$ be%
an%
induced subgraph of $\aux{G}$ with $h$ vertices. For every $\beta > 0$, \textsf{AuxReach} runs in $\widetilde{O}({h}^{1/2 + \beta/2})$ space and polynomial time.
\end{lemma}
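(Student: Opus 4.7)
The plan is to analyze \textsf{AuxReach} level by level in the recursion tree, showing that at each level the local workspace is $\widetilde{O}(h^{1/2+\beta/2})$, and then bounding the recursion depth by a constant so these bounds compose.

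First I would account for the local space at one invocation on a subgraph with $h$ vertices. By Theorem~\ref{thm:constructpsc}, computing the $h^{1-\beta}$-$\psc$ $C$ costs $\widetilde{O}(h^{1/2+\beta/2})$ space, and the output $C$ has $O(h^{1/2+\beta/2})$ vertices and edges. The array \textsf{visited} has one cell per vertex/edge of $C$, each storing either a bit or the name of a vertex of $H$; since every vertex is named by $O(\log m)$ bits, the array occupies $\widetilde{O}(h^{1/2+\beta/2})$ space. The outer loop counter $i$ and the iterators over $C$ take $O(\log h)$ space. To enumerate components of $\sep{H}{C}$ we do not store them explicitly; instead we iterate over the candidate ``representative'' (the lowest-indexed vertex of a component) and, for each candidate $w$ and target vertex $u'$, use Reingold's undirected reachability algorithm on the underlying undirected graph of $\sep{H}{C}$ to decide membership in the same component, which costs $O(\log h)$ space. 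Likewise, $H$ itself is never materialized: whenever an edge query is needed, the appropriate subgrid of $G$ is solved on demand by the outer recursion from Section~\ref{sec:aux}. Summing these contributions, the local workspace of a single \textsf{AuxReach} call is $\widetilde{O}(h^{1/2+\beta/2})$.

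Next I would bound the depth of recursion. Each recursive call is on $H[U\cup\{w,u'\}]$ where $U\in \concom{\sep{H}{C}}$ has at most $h^{1-\beta}$ vertices, so the subgraph passed to the recursive call has at most $h^{1-\beta}+2$ vertices. After $k$ levels the size is at most $h^{(1-\beta)^k}(1+o(1))$; the algorithm hits its DFS base case once this is at most $m^{1/8}$. Since we will only invoke \textsf{AuxReach} on subgraphs of $\aux{G}$ with $h\le 4m^{1+\alpha}$, taking logarithms shows that $k=O(1/\log(1/(1-\beta)))$ levels suffice, which for any fixed $\beta>0$ is a constant (the paper writes $3/\log((1-\beta)^{-1})$). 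Because the nested calls in Lines~\ref{line:recur1}--\ref{line:recur2} are executed sequentially, their workspaces can be reused, so the total space is the sum over the current chain of active calls of the per-level workspace, which is $\widetilde{O}(h^{1/2+\beta/2})$ times a constant, hence $\widetilde{O}(h^{1/2+\beta/2})$.

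Finally, for the time bound I would observe that at a single \textsf{AuxReach} invocation the outer \textbf{for} loop runs $|H|$ times, the two inner \textbf{foreach} loops iterate over at most $O(h^{1/2+\beta/2})$ items of $C$, the enumeration over components and closest crossing edges is polynomial in $h$, and each Reingold call and each edge query into the implicit $H$ costs polynomial time. Thus one level does $\mathrm{poly}(h)\cdot \mathrm{poly}(m)$ work excluding the recursive sub-call. Letting $T(h)$ be the worst case running time, we get $T(h)\le \mathrm{poly}(h,m)\cdot T(h^{1-\beta}+2)$, and since the recursion has constant depth this unrolls to a polynomial in $m$. The only subtle point, which I expect to be the main obstacle when written carefully, is the accounting for components of $\sep{H}{C}$ without storing them: one must argue that the ``lowest-indexed-vertex'' representative together with Reingold's algorithm lets us simulate ``$U\in\concom{\sep{H}{C}}$'' and the induced subgraph $H[U\cup\{w,u'\}]$ on-the-fly within the stated space budget, which is exactly what makes the per-level bound $\widetilde{O}(h^{1/2+\beta/2})$ rather than something growing with the size of the components.
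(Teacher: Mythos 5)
Your proposal is correct and follows essentially the same route as the paper: bound the per-invocation workspace by the pseudoseparator computation plus the \textsf{visited} array ($\widetilde{O}(h^{1/2+\beta/2})$), avoid storing components of $\sep{H}{C}$ by using lowest-indexed representatives with Reingold's algorithm, observe that the recursion depth is the constant $3/\log((1-\beta)^{-1})$ with sequential calls reusing space, and unroll a constant-depth polynomial recurrence for the time bound. The only cosmetic difference is that the paper states the recurrences explicitly and carries an additive base-case term ($\widetilde{O}(m^{1/4})$) through the solution, which your write-up absorbs since it is dominated in the relevant regime.
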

\begin{proof}
  Since the size of a component $U$ in $\concom{\sep{H}{C}}$ might be too large, we will not explicitly store it. Instead we identify a component by the lowest index vertex %
 in it and use Reingold's algorithm on $\sep{H}{C}$ to determine if a vertex is %
 in $U$. Let $S(m,h)$ and $T(m,h)$ denote the space and time complexity functions respectively of \textsf{AuxReach}, where $G$ is an $m \times m$ grid %
digraph %
 and $h$ is the number of vertices in the %
digraph %
 $H$. As noted earlier the depth of the recursion is at most 
$d:= \lceil -3/\log(1-\beta)\rceil$. %

Consider $S(m,h)$ for any $h > m^{1/8}$. By \expref{Theorem}{thm:constructpsc}, we require $\widetilde{O}(h^{1/2 + \beta/2})$ space to execute \expref{Line}{line:calcpsep}. We can potentially mark all the vertices of $C$ and for each edge $e$ of $C$ we store at most one additional vertex in $\textsf{visited}[e]$. Since the size of $C$ is at most $O(h^{1/2 + \beta/2})$, we require $\widetilde{O}(h^{1/2 + \beta/2})$ space to store $C$. By induction, a call to \textsf{AuxReach} in \expref{line}{line:recur1} \expref{and}{line:recur2} requires $S(m, h^{1 - \beta})$ space which can be subsequently reused. Hence the space complexity satisfies the following recurrence. Then,
\[S(m,h) =
\begin{cases}
S(m, h^{1 - \beta}) + \widetilde{O}(h^{1/2 + \beta/2}) & h > m^{1/8}\\
\widetilde{O}(h) & \textrm{otherwise}.
\end{cases}\]
Solving we get $S(m,h) = \widetilde{O}(h^{1/2 + \beta/2} + m^{1/4})$.

Next we measure the time complexity of \textsf{AuxReach}. Consider the case when $h > m^{1/8}$. The total number of steps in \textsf{AuxReach} is some polynomial in $h$, say $p$. Moreover \textsf{AuxReach} makes $q$ calls to \textsf{AuxReach}, where $q$ is some other polynomial in $h$. Hence $q(h) \leq p(h)$. Then,
\[T(m,h) =
\begin{cases}
q\cdot T(h^{1 - \beta}) + p & h > m^{1/8} \\
O(h) & \textrm{otherwise}.
\end{cases} \]
Solving the above recurrence we get $T(m,h) = O(p\cdot q^d + m^{1/4}) = O(p^{2d} + m^{1/4})$.
\end{proof}

\section{Solving grid %
digraph %
}
\label{sec:final}
Let $G$ be an $m \times m$ grid %
digraph %
. As mentioned in the introduction, our objective is to run the \expref{algorithm}{alg:psgreach} on the %
digraph %
 $\aux{G}$. Consider two vertices $x$ and $y$ of $\aux{G}$. Note that, by definition, $y$ is reachable from $x$ in $\aux{G}$ if and only if $y$ is reachable from $x$ in $G$. Hence it is sufficient to work with the %
digraph %
 $\aux{G}$. However, we do not have explicit access to the edges of $\aux{G}$. Note that we can obtain the edges of $\aux{G}$ by solving the corresponding subgrid of $G$ to which that edge belongs. If the subgrid is small enough, then we use a standard linear %
space-traversal  %
algorithm. Otherwise, we use our algorithm recursively on the subgrid. \expref{Algorithm}{alg:gridreach} outlines this method.
\begin{algorithm}
\caption{\textsf{GridReach}$(\widehat{G}, \widehat{s}, \widehat{t}, m)$}
\label{alg:gridreach}
\KwIn{A grid %
digraph %
 $\widehat{G}$ and two vertices $\widehat{s}$, $\widehat{t}$ of $\widehat{G}$ and a positive integer $m$}
\KwOut{\textsf{true} if there is a path from $s$ to $t$ in $G$ and \textsf{false} otherwise}
\If{$\widehat G$ is smaller than $m^{1/8} \times m^{1/8}$ grid}{
Use Depth-First Search to solve the problem\;
}
\Else
{
Use \textsf{ImplicitAuxReach}$(\aux{G}, \widehat{s}, \widehat{t})$ to solve the problem\;
\tcc*[f]{\textsf{ImplicitAuxReach} executes the same way as \textsf{AuxReach} except for the case when it queries an edge $(u, v)$ in a block $B$ of $\aux{G}$. In this case, the query is answered by calling \textsf{GridReach}$(B, u, v, m)$ where $B$ is the subgrid in which edge $(u,v)$ might belong.}
}
\end{algorithm}

Consider an $\widehat{m} \times \widehat{m}$ grid %
digraph %
 $\widehat{G}$. Let $S(\widehat{m})$ be the space complexity and $T(\widehat{m})$ be the time complexity of executing \textsf{GridReach} on $\widehat{G}$. Note that the size of $\aux{\widehat{G}}$ is at most $\widehat{m}^{1 + \alpha}$. For $\widehat{m} > m^{1/8}$, the space required to solve the grid %
digraph %

is   %
$S(\widehat{m}) = S(\widehat{m}^{1-\alpha}) + \widetilde{O}((\widehat{m}^{1+\alpha})^{1/2 + \beta/2})$. This is because, a query whether $(u,v) \in \widehat{G}$%
invokes  %
a recursion which %
requires  %
$S(\widehat{m}^{1-\alpha})$ space and the main computation of \textsf{ImplicitAuxReach} %
can  %
be done using $\widetilde{O}((\widehat{m}^{1+\alpha})^{1/2 + \beta/2})$ space. Hence we get the following recurrence for space complexity.
\[S(\widehat{m}) =
\begin{cases}
S(\widehat{m}^{1 - \alpha}) + \widetilde{O}((\widehat{m}^{1+\alpha})^{1/2 + \beta/2}) & \widehat{m} > m^{1/8} \\
\widetilde{O}(\widehat{m}^{1/4}) & \textrm{otherwise}
\end{cases}\]
Similar to the analysis of \textsf{AuxReach}, for appropriate polynomials $p$ and $q$, the time complexity %
satisfies  %
the following recurrence:
\[T(\widehat{m}) = \begin{cases}
q (\widehat{m}) \cdot T(\widehat{m}^{1 - \alpha}) + p(\widehat{m}) & \widehat{m} > m^{1/8} \\
O(\widehat{m}) & \textrm{otherwise}.
\end{cases}\]
Solving we get $S(m) = \widetilde{O}(m^{1/2 + \beta/2 + \alpha/2 + \alpha\beta/2})$ and $T(m) = \poly(m)$. For any constant $\epsilon>0$, we can chose $\alpha$ and $\beta$ such that $S(m) = {O}(m^{1/2 + \epsilon})$.

\section{Conclusion}
Our result improves upon the known time--space bounds on the grid %
digraph %
. Asano et al. \cite{Asano} used a clever idea of exploiting the grid structure of the input %
digraph %
to reduce its size. Their exploitation destroyed the grid structure of the input
digraph %
, but they did not go far enough to destroy its planar structure as well. Our exploitation goes a step further. We get a non-planar auxiliary %
digraph %
 that is smaller as a result and has just enough structure to solve reachability in a space-efficient manner. It remains to be seen if the structure could be further exploited to get a smaller \emph{auxiliary-like} %
digraph %
 in which reachability can be solved.

It is known that reachability in planar %
digraph %
s %
can  %
be reduced to reachability in grid %
digraph %
s in logarithmic space \cite{Allender}. However, such a reduction results in at least a quadratic blow-up in the size of the %
digraph %
. In principle, it would seem that an improvement to the 
state of the art  %
for the planar %
digraph %
 can be obtained by improving the result for grid %
digraph %
s. However, we feel that this would be a difficult direction to go. Note that we solve an $h$-vertex auxiliary %
digraph %
 in $O(h^{1/2 + \beta/2})$ space as a subroutine for our grid%
digraph %
 algorithm. A planar %
digraph %
 with its embedding can be thought of as an auxiliary %
digraph %
, and hence our algorithm contains within itself a solution to the planar %
digraph %
 as well. For this reason, we feel that directly solving a planar %
digraph %
would be easier than going down the %
grid digraph %
 routine. A significantly different approach would be required to directly design an algorithm for %
grid digraph %
, one which does not use a solution for the class of planar %
digraph %
s, or its superclass, as a subroutine.

\bibliographystyle{tocplain}   %

\bibliography{bibstrings,v019a002,bibtail}

\begin{tocauthors}
\begin{tocinfo}[Jain]
Rahul Jain \orcidlink{0000-0002-8567-9475}\\
Research associate\\
Department of Theoretical Computer Science\\
Fernuniversit\"at in Hagen\\
Germany\\
rahul\tocdot{}jain\tocat{}fernuni-hagen\tocdot{}de\\   %
\url{https://www.fernuni-hagen.de/ti/en/team/rahul.jain}   %
\end{tocinfo}
\begin{tocinfo}[Tewari]
  Raghunath Tewari\\
 Associate professor\\
 Department of Computer Science and Engineering \\
 Indian Institute of Technology Kanpur\\
  rtewari\tocat{}cse\tocdot{}iit\tocdot{}ac\tocdot{}in \\
  \url{https://www.cse.iitk.ac.in/users/rtewari}
\end{tocinfo}
\end{tocauthors}

\begin{tocaboutauthors}
\begin{tocabout}[Jain]
  \textsc{Rahul Jain} is a postdoctoral research associate at Fernuniversit\"at in Hagen, Germany. He obtained his doctorate at the Indian Institute of Technology Kanpur, India in 2020 under \href{https://www.cse.iitk.ac.in/users/rtewari/}{Dr. Raghunath Tewari}.  %
\end{tocabout}

\begin{tocabout}[Tewari]
 \textsc{Raghunath Tewari} is an Associate Professor in the Computer Science and Engineering %
Department  %
at the Indian Institute of Technology Kanpur. He obtained his undergraduate degree in mathematics and computer science from Chennai Mathematical Institute in 2005.
Thereafter he did his Masters in 2007 and \phd\ in 2011 in computational complexity theory
at  %
the University of Nebraska -- Lincoln under \href{http://cse.unl.edu/~vinod/}{Dr. N. V. Vinodchandran}.  %
He is interested in computational complexity theory, algorithms and graph theory.

\end{tocabout}
\end{tocaboutauthors}

\end{document}